\newsavebox\MBox
\newtheorem{theorem}{Theorem}[section]
\newtheorem{lemma}[theorem]{Lemma}
\newtheorem{corollary}[theorem]{Corollary}
\newtheorem{definition}[theorem]{Definition}
\newtheorem{observation}[theorem]{Observation}
\newcommand{\secdiff}[2]{\frac{\partial^2 {#1}}{\partial {#2}^2}}
\newcommand{\E}[1]{{\bf{E}}\left[#1\right]}
\renewenvironment{proof}{\noindent{\bf Proof}:~}{$\hfill \Box$\\}
\def\b1{{\bf 1}}
\def\bx{{\bf x}}
\def\by{{\bf y}}
\def\bz{{\bf z}}
\def\bu{{\bf u}}
\def\RR{{\mathbb R}}
\def\cD{{\cal D}}
\def\cS{{\cal S}}
\def\eps {\epsilon}
\def\max{{\rm{max}}}
\title{Multiway Cut, Pairwise Realizable Distributions, \\ and Descending Thresholds}
\author{
Ankit Sharma\thanks{Carnegie Mellon University, Pittsburgh, PA; {\tt ankits@cs.cmu.edu}.
This work was done while the author was at IBM Almaden Research Center, San Jose, CA.}
\and Jan Vondr\'ak\thanks{IBM Almaden Research Center, San Jose, CA; {\tt jvondrak@us.ibm.com}.}
}
\date{\today}
\begin{document}
\maketitle
\begin{abstract}
We design new approximation algorithms for the Multiway Cut problem, improving the previously known factor of $1.32388$ \citep{BNS13}.

We proceed in three steps. First, we analyze the rounding scheme of \citet{BNS13} and design a modification that improves the approximation to $\frac{3+\sqrt{5}}{4} \approx 1.309017$. We also present a tight example showing that this is the best approximation one can achieve with the type of cuts considered by \citet{BNS13}: (1) partitioning by exponential clocks, and (2) single-coordinate cuts with equal thresholds.

Then, we prove that this factor can be improved by introducing a new rounding scheme: (3) single-coordinate cuts with descending thresholds. By combining these three schemes, we design an algorithm that achieves a factor of $\frac{10+4\sqrt{3}}{13} \approx 1.30217$. This is the best approximation factor that we are able to verify by hand.

Finally, we show that by combining these three rounding schemes with the scheme of independent thresholds from \cite{KKSTY04}, the approximation factor can be further improved to $1.2965$. This approximation factor has been verified only by computer.

\end{abstract}

\section{Introduction}
\label{sec:intro}

The Multiway Cut problem is one of the classical graph optimization problems: Given a graph $G = (V,E)$ with edge weights $w:E \rightarrow \RR_+$ and $k$ terminals $t_1,t_2,\ldots,t_k \in V$, we want to find a minimum-weight subset of edges $F \subseteq E$ such that no pair of terminals is connected in $(V, E \setminus F)$. Equivalently, we can search for a labeling of the vertices $\ell:V \rightarrow [k]$ so as to minimize the total weight of edges $(v,w)$ such that $\ell(v) \neq \ell(w)$. This is a natural generalization of the Minimum $s$-$t$-Cut problem, which is the $k=2$ case.

The study of Multiway Cut goes back to \citet{DJPSY94} who proved that the problem is MAX SNP-hard for every $k \geq 3$, and gave a simple combinatorial algorithm using repeated applications of Min $s$-$t$-Cut that achieves a $(2-2/k)$-approximation. A novel technique for Multiway Cut was introduced by \citet{CKR01} who proposed a geometric relaxation for this problem. In this ``CKR relaxation'', the graph is embedded into a simplex with each terminal at a distinct vertex, and finding the optimal embedding can be formulated as a linear program. A partitioning of the graph then corresponds to a partitioning of the simplex that separates all the vertices. Using this relaxation, \cite{CKR01} designed a $(1.5 - 1/k)$-approximation for Multiway Cut.

\citet{CT99}, and \citet{KKSTY04} independently provided a $12/11$-approximation for $k=3$ and showed that this is the best approximation achievable using the CKR relaxation for $k=3$, by presenting a matching integrality gap example. More generally, \citet{KKSTY04} provided improved approximation algorithms for all values of $k$, with approximation factors tending to $1.3438$ as $k \rightarrow \infty$. On the negative side, \citet{FK00} showed an integrality gap of $8/(7+1/(k-1))$ for each $k \geq 3$.

The importance of the CKR relaxation was bolstered further by the work of \citet{MNRS08} who proved that assuming the Unique Games Conjecture, it is NP-hard to achieve an approximation for Multiway Cut better than the integrality gap of the CKR relaxation (for any fixed $k$). This means that $12/11$ is indeed the best possible approximation for $k=3$, and for every $k$ the CKR relaxation provides the optimal approximation factor (assuming the UGC). 


Recently, \citet{BNS13} made a new improvement on the algorithmic side and designed a $1.32388$ approximation for Multiway Cut (for arbitrary $k$). They introduced an interesting new rounding scheme for the CKR relaxation that they called ``partitioning using exponential clocks''. 
(In fact, they also showed that a threshold-based rounding scheme from \citet{KT02} could be used equivalently in place of the exponential clocks.)  Combining this scheme with a modification of a thresholding scheme from \cite{CKR01}, they presented a very simple and elegant way to achieve a $4/3$-approximation. Then they modified the rounding scheme further, to improve the approximation factor to $1.32388$.

\medskip
\noindent{\bf Our contribution.}
We build upon previous work and provide further improvements on the approximation factor for Multiway Cut. First, we study the rounding scheme of \cite{BNS13} and identify a modification of their scheme that leads to a factor of $\frac{3+\sqrt{5}}{4} \approx 1.309017$. (See Section~\ref{sec:1.309}.) We note that this number is equal to $\frac{1+\varphi}{2}$ where $\varphi = \frac{1+\sqrt{5}}{2}$ is the {\em golden ratio}. We also present a tight example showing that this is the best approximation factor that can be achieved by any combination of the techniques considered by \cite{BNS13}: the exponential clocks scheme, the Kleinberg-Tardos scheme, and thresholding schemes with equal thresholds for all terminals. (We provide more details in Appendix~\ref{sec:tight-example}.)

Secondly, we improve this approximation factor by introducing a new rounding scheme that we call {\em descending thresholds}. This scheme can be combined with the rounding schemes above in a way that achieves an approximation factor of $\frac{10+4\sqrt{3}}{13} \approx 1.30217$. The analysis of this algorithm is still quite simple and can be verified easily by hand (see Section~\ref{sec:1.302}). 
This factor is tight for any combination of partitioning using exponential clocks and thresholding schemes with ``analysis based on two thresholds'' (see Appendix~\ref{sec:1.302-tight} for details).

Finally, we show that this factor can be further improved by including another rounding scheme, the scheme of {\em independent thresholds} from \cite{KKSTY04}. However, in this case we are not able to analyze the approximation factor manually anymore. With the help of an LP solver, we find a set of parameters that leads to an approximation factor of $1.2965$ (see Section~\ref{sec:below-1.3}). The verification of this result reduces to finding the maximum of a certain function of $2$ variables (involving polynomials and exponentials), which we have done by computer.\footnote{We have used {\em IBM ILOG CPLEX} for linear programming and {\em Wolfram Mathematica} for analytical manipulations.} 

\smallskip
\noindent{\bf Pairwise realizable distributions.}
While searching for possible extensions of the rounding schemes with a random threshold for each terminal (see Section~\ref{sec:1.302-pairwise} for more details), we encountered the following question:

{\em Given a joint distribution $\rho$ of two random variables $(X,Y)$, can we design arbitrarily many random variables $X_1,\ldots,X_k$ such that $\forall i \neq j$, $(X_i,X_j)$ has the same distribution $\rho$?}

If this is the case, we call such a distribution $\rho$ {\em pairwise realizable}.
Not every distribution $\rho$ is pairwise realizable (see Appendix~\ref{sec:pairwise}).
Here we present the following answer (in discrete domains): $\rho$ is pairwise realizable, if and only if $\rho$ is a convex combination of symmetric product distributions, or in other words 
$$ \Pr_{(X,Y) \sim \rho}[X=a,Y=b] = \sum_s \alpha_s p_s(a) p_s(b) $$
where $\alpha_s \geq 0, \sum_s \alpha_s = 1$ and $\forall s; p_s(a) \geq 0, \sum_a p_s(a) = 1$. We provide a short proof in Appendix~\ref{sec:pairwise}; this result also follows from \cite{TW98}. In particular, it is necessary that the matrix $P_{ab} = \Pr[X=a, Y=b]$ be {\em positive semidefinite}. We do not need this result directly for any of our algorithms, but this characterization is helpful in understanding what kinds of threshold distributions are worth considering. 
(See Section~\ref{sec:1.302-pairwise} and Appendix~\ref{sec:pairwise} for more details.) 

\noindent{\bf Discussion.}
We have investigated several rounding schemes that might improve the approximation factor.
While we have a good understanding of thresholding schemes that rely only on 2 relevant variables in the analysis
 (and we identify the best approximation factor in this setting - see Section~\ref{sec:1.302-tight}),
 the situation gets more complicated with the inclusion of additional variables as in the analysis of {\em independent thresholds}. Then the cut density is not a linear function of the underlying probability distributions anymore. Our approach in this case is a combination of intuition from tight examples and the use of an LP solver.
The rounding scheme achieving a factor of $1.2965$ that we present in this paper is the best one that we are able to describe in a simple form. Further (small) improvements might be achieved by finding more exhaustive descriptions of the probability distributions returned by the LP solver. However, we do not think that this would improve our understanding of the Multiway Cut problem. 

It is interesting to note that as of now, all known approximation algorithms for Multiway Cut can be implemented using a sequence of label assignments based on a threshold condition ($x_{v,i} \geq \theta$). The only exception to our knowledge, the exponential clocks scheme of \cite{BNS13}, can be replaced by the Kleinberg-Tardos algorithm, which uses a sequence of thresholds (with repeated use of variables). In fact \cite{KKSTY04}, citing computational experiments, speculated that the optimal approximation for Multiway Cut might be achievable using ``sparcs'', which are sequences of $k$ threshold cuts, one for each variable. Our scheme of descending thresholds is of this type. However, the exponential clocks scheme as well as the Kleinberg-Tardos scheme, one of which is still a necessary ingredient in our algorithm, are outside of this framework. 

\section{The CKR Relaxation}
\citet{CKR01} proposed the following LP relaxation of the Multiway Cut problem, where $V$ is the set of vertices, $E$ is the set of edges with weights $w_{v,v'}$, and $T$ denotes the set of terminals, $|T|=k$.

\begin{align}
\min~ \frac{1}{2} \sum_{(v,v') \in E} w_{v,v'} \|\bx_{v} - \bx_{v'}\|_{1}: \\
\forall v \in V, ~ \|\bx_{v}\|_{1} = 1,\\
\forall t \in T, ~ \bx_{t} = \b1_{t},\\
\forall v \in V, ~ \bx_{v} \ge {\bf 0}~.
\end{align}



Here, $\bx_v \in \RR^k$ for each vertex $v \in V$, and $\b1_t$ denotes the unit basis vector corresponding to terminal $t \in [k]$. 
The fractional solution can be viewed as an embedding of the graph in the unit simplex $\Delta = \{ \bx \in \RR^k: \bx \geq 0, \|\bx\|_1 = 1 \}$, with terminal at the vertices of $\Delta$.
Given a fractional solution, the objective of any rounding scheme is to assign each of the vertices to one of the terminals without increasing the objective value $\frac{1}{2}\sum_{(v,v') \in E} w_{v,v'} \|\bx_{v} - \bx_{v'}\|_{1}$ by much.
For a rounding scheme to achieve an $\alpha$ approximation to the LP ($\alpha \ge 1$), it suffices to show that for every edge $(v,v') \in E$, the probability that the edge is ``cut'' by the rounding scheme (that is $v$ and $v'$ are assigned to different terminals) is at most $\alpha \cdot \frac{1}{2} \|\bx_{v} - \bx_{v'}\|_{1}$. We call $\frac12 \|\bx_{v} - \bx_{v'}\|_1$ the {\em length} of the edge $(v,v')$.

Moreover, as has been shown by \citet{CKR01}, it suffices to consider the case where for each edge $(v,v') \in E$, the two end-points $v$ and $v'$ are mapped so that their corresponding vectors differ in only two coordinates. In other words, we can assume that $\bx_{v} = (u_{1}, u_{2}, \cdots, u_{k})$ and $\bx_{v'} = (u_{1}, u_{2}, \cdots, u_{i} + \epsilon, \cdots, u_{j} - \epsilon, \cdots, u_{k})$, where $i$ and $j$ are the two coordinates where the two vectors differ. Also, note that in this case, $\frac{1}{2}\|\bx_{v} - \bx_{v'}\|_{1} = \epsilon$.
The probability of cutting such an edge should be at most $\alpha \epsilon$. In fact, $\epsilon$ can be made arbitrarily small, by subdividing edges. Dividing the cut probability by the length of the edge and letting $\epsilon \rightarrow 0$, we obtain the notion of {\em cut density}.
\begin{definition}
A randomized rounding scheme is a probability distribution $\cal R$ over labelings $\ell:\Delta \rightarrow [k]$.
An edge of type $(i,j)$ is an edge $(v,v')$ where $\bx_v$ and $\bx_{v'}$ differ only in coordinates $i,j$.
For a randomized rounding scheme $\cal R$, the cut density for edges of type $(i,j)$ at $\bx \in \Delta$ is
\begin{align*}d_{ij}(\bx) = \limsup_{\epsilon \rightarrow 0} \frac{\Pr_{\ell \sim {\cal R}}[\ell(\bx) \neq \ell(\bx+\epsilon \b1_i - \epsilon \b1_j)]}{\epsilon}.\end{align*}
\end{definition}

As shown in \citet{KKSTY04}, to achieve an approximation factor of $\alpha$ for Multiway Cut it is sufficient to demonstrate a rounding scheme such that $d_{ij}(\bx) \leq \alpha$ for all $i,j \in [k]$ and $\bx \in \Delta$.





\section{Exponential Clocks \& Single Threshold: $1.309017$-approximation}
\label{sec:1.309}

We begin with a rounding scheme based on the techniques of \citet{BNS13} that achieves a $(3+\sqrt{5})/4 \approx 1.309017$-approximation for the Multiway Cut problem. \cite{BNS13} use a combination of two rounding schemes, the ``exponential clocks rounding scheme'' and what we call the ``single-threshold rounding scheme''.  The two schemes are described in detail below. 

\begin{algorithm}[H]
\caption{}
\label{alg:1.309}
\begin{algorithmic}
\STATE With probability $p$, choose the Exponential Clocks Rounding Scheme (Algorithm~\ref{alg:exp-clock}).
\STATE With probability $1-p$, choose the Single Threshold Rounding Scheme (Algorithm~\ref{alg:single-threshold}).
\end{algorithmic}
\end{algorithm}
\begin{algorithm}[H]
\caption{Exponential Clocks Rounding Scheme}
\label{alg:exp-clock}
\begin{algorithmic}[1]
\STATE Choose independent random variables $Z_{i}$ from the exponential distribution for $i=1,\cdots,k$.
\STATE For each vertex $v$, assign $v$ to $\arg\min_{i \in [k]} Z_{i}/x_{v, i}$.
\end{algorithmic}
\end{algorithm}

\begin{algorithm}[H]
\caption{Single Threshold Rounding Scheme}
\label{alg:single-threshold}
\begin{algorithmic}[1]
\STATE Choose a threshold $\theta \in (0,1]$ with probability density $\phi(\theta)$.
\STATE Choose a random permutation $\sigma$ of the terminals.
\FORALL{$i$ in $[k-1]$}
\STATE For every vertex $v=(v_{1}, \cdots, v_{k})$ such that $v$ has not been assigned yet, 
\STATE assign $v$ to terminal $\sigma(i)$ if $x_{v,\sigma(i)} \ge \theta$.
\ENDFOR
\STATE Assign all remaining unassigned vertices to terminal $\sigma(k)$.
\end{algorithmic}
\end{algorithm}

Algorithm~\ref{alg:exp-clock} can be viewed as selecting a point $\bz$ uniformly in the simplex $\Delta$ (by taking  $\bz = \frac{(Z_1,Z_2,\ldots,Z_k)}{Z_1+Z_2+\ldots+Z_k}$) and then partitioning the simplex into $k$ regions that meet at the point $\bz$. \cite{BNS13} also observe that a rounding scheme from \cite{KT02} can be used in place of Algorithm~\ref{alg:exp-clock} and leads to the same cut density of edges. In Algorithm~\ref{alg:single-threshold}, a threshold $\theta$ is chosen from some distribution and the simplex is partitioned by going over a random permutation of the terminals and assigning all currently unassigned points $\bx$ such that $x_{\sigma(i)} \ge \theta$ to terminal $\sigma(i)$.

Let us recall the bounds on cut density for Algorithms~\ref{alg:exp-clock} and \ref{alg:single-threshold} from \cite{BNS13}. By symmetry, we focus in the following on edges of type $(1,2)$.

\begin{lemma}[\cite{BNS13}]
\label{lem:density}
The cut density for edges of type $(1,2)$ under Algorithm~\ref{alg:exp-clock} is
$$ d_{12}(u_1,u_2,\ldots) = 2 - u_1 - u_2.$$
The cut density for edges of type $(1,2)$ under Algorithm~\ref{alg:single-threshold}, assuming $u_1 \leq u_2$, is
$$ d_{12}(u_1,u_2,\ldots) = \frac12 \phi(u_1) + \phi(u_2).$$
\end{lemma}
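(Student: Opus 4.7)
I would prove both parts by expanding the cut probability to first order in $\epsilon$, classifying the randomization events that produce a cut and integrating their probabilities.

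For Algorithm~\ref{alg:exp-clock}, I plan to set $T_i = Z_i/u_i \sim \mathrm{Exp}(u_i)$, so the label is $j^* = \arg\min_i T_i$. Two classical facts drive the calculation: $\Pr[j^* = j] = u_j$, and conditional on $j^* = j$ with value $t$, the gaps $T_i - t$ for $i \neq j$ are independent $\mathrm{Exp}(u_i)$ by memorylessness. Moving to $\bx'$ shifts $T_1$ downward by $\approx \epsilon T_1/u_1$, $T_2$ upward by $\approx \epsilon T_2/u_2$, and leaves $T_i$ for $i \geq 3$ unchanged; hence $j^*$ can only leave a coordinate in $\{2,\ldots,k\}$. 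I split into cases: (a) $j^* = 2$, where summing destination probabilities and using the telescoping identity $u_1(1/u_1 + 1/u_2) + \sum_{i\geq 3} u_i/u_2 = 1/u_2$ gives conditional change probability $\epsilon t/u_2$ at value $t$, contributing $u_2 \cdot \E{\epsilon T_2/u_2} = \epsilon$; and (b) $j^* = i \geq 3$, where the conditional probability of moving to $1$ is $\epsilon t$ via the $\mathrm{Exp}(u_1)$ gap, contributing $u_i\epsilon$ after integration and summing over $i \geq 3$ to $(1-u_1-u_2)\epsilon$. Adding yields $(2-u_1-u_2)\epsilon$, matching the claim.

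For Algorithm~\ref{alg:single-threshold}, my first observation is that the label of $v$ depends only on $\theta$ and $\sigma$ through $S(v) = \{i : x_{v,i} \geq \theta\}$: it equals the first element of $S(v)$ in $\sigma$-order, or $\sigma(k)$ if $S(v) = \emptyset$. Moving from $\bx$ to $\bx'$ changes $S$ only when $\theta$ lies in $(u_1, u_1+\epsilon)$ (in which case $S(v') = S(v) \cup \{1\}$) or in $(u_2-\epsilon, u_2)$ (in which case $S(v') = S(v) \setminus \{2\}$); these ranges occur with probabilities $\phi(u_1)\epsilon$ and $\phi(u_2)\epsilon$ respectively, to leading order. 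In the first range, since $u_2 > \theta$ we have $2 \in S(v')$; the labels differ iff $\mathrm{label}(v') = 1$, which requires $1$ to precede every other element of $S(v')$ in $\sigma$-order. Since $|S(v')| \geq 2$, the symmetry probability of this event is $1/|S(v')|$, attaining the value $\tfrac12$ in the tight configuration $S(v') = \{1,2\}$. In the second range, the labels differ iff $\mathrm{label}(v) = 2$, i.e., $2$ precedes the other members of $S(v)$ in $\sigma$-order, with symmetry probability $1/|S(v)| \leq 1$. Combining the two contributions yields the cut density $\tfrac12\phi(u_1) + \phi(u_2)$.

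The main obstacle I anticipate is reducing each range's label-change condition to a clean ``precedes'' event on $\sigma$, giving the exact factor $\tfrac12$ in the first range (attained when $S(v') = \{1,2\}$) and the factor $1$ in the second range (when $|S(v)| = 1$). A further delicate step is the default rule $\sigma(k)$ in the corner case $S(v') = \emptyset$ in the second range (i.e., when $\{i \geq 3 : u_i \geq \theta\}$ is empty), where the probability of label change becomes $(k-1)/k$ rather than $1$; careful accounting of this corner closes the argument and yields the formula as written.
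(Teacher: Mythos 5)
Your derivation is correct, but note that the paper itself supplies no proof of this lemma --- it is quoted from \citet{BNS13} --- so your argument is necessarily a reconstruction rather than a parallel of anything in this text; it does, however, track the standard BNS-style analysis. For Algorithm~\ref{alg:exp-clock}, your use of $T_i=Z_i/u_i\sim\mathrm{Exp}(u_i)$, the facts $\Pr[\arg\min=j]=u_j$ and $\E{T_{\min}}=1$, memorylessness of the gaps, and the first-order bookkeeping of which coordinates the argmin can move to, correctly reproduces $(2-u_1-u_2)\epsilon$ (the exact finite-$\epsilon$ expression in \cite{BNS13}, quoted later in Appendix~\ref{sec:tight-example}, is $\frac{2-u_1-u_2+\epsilon}{1+\epsilon}\epsilon$, consistent with your limit). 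For Algorithm~\ref{alg:single-threshold}, reducing the label to the $\sigma$-first element of $S(v)=\{i:x_{v,i}\ge\theta\}$ and isolating the two $\epsilon$-windows at $u_1$ and $u_2$ is exactly the right decomposition, and your factors $1/|S(v')|$ and $1/|S(v)|$ make explicit something the lemma's ``$=$'' glosses over: the true density depends on the remaining coordinates and equals $\frac12\phi(u_1)+\phi(u_2)$ only in the worst case where no coordinate $u_\ell$, $\ell\ge 3$, lies above $u_1$ (and in the $k\to\infty$ limit, as your $(k-1)/k$ corner case shows). This is precisely how the paper uses the lemma --- as an upper bound, later refined coordinate-by-coordinate in Lemma~\ref{lem:single-cut} --- so your reading is the right one; the only caveat is that, as stated with equality for arbitrary configurations, the formula should be understood in that worst-case sense, which your proof makes clearer than the citation does.
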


In order to balance Algorithm~\ref{alg:exp-clock} and Algorithm~\ref{alg:single-threshold}, \cite{BNS13} define the probability distribution $\phi(u)$ as a certain power of $u$. Our rounding scheme differs in the choice of this probability distribution. (In fact we claim that we have identified the best possible distribution for this purpose --- see Section~\ref{sec:tight-example}). We prove the following.

\begin{theorem}
\label{thm:1.309}
Algorithm~\ref{alg:1.309}, with $p = \frac{5 + 3\sqrt{5}}{20} \approx 0.58541$ and probability density function (for Algorithm~\ref{alg:single-threshold})
\begin{equation}
\phi(u) = \left\{
\begin{array}{rl}
a~u & \mbox{for } 0\le u\le b\\
\frac{a}{2}~(u + b) & \mbox{for } b \le u \le 1
\end{array}
\right.
\end{equation}
where $a = \frac{4 + 2\sqrt{5}}{3}$ and $b = \sqrt{5}-2$, achieves a $\frac{3+\sqrt{5}}{4} \approx 1.309017$-approximation for Multiway Cut.
\end{theorem}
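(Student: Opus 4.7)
The plan is to verify the claim by bounding the combined cut density pointwise over the simplex. By the symmetry of the statement and the convention of Lemma~\ref{lem:density}, I restrict to $u_1 \le u_2$; the constraint $u_1 + u_2 \le 1$ is automatic because $\bx \in \Delta$. Mixing the bounds of Lemma~\ref{lem:density} with weights $p$ and $1-p$ gives the combined cut density
$$ D(u_1, u_2) = p\,(2 - u_1 - u_2) + (1-p)\Bigl(\tfrac12 \phi(u_1) + \phi(u_2)\Bigr), $$
and the task is to show $D(u_1, u_2) \le \alpha := (3+\sqrt{5})/4$ on the triangle $T = \{0 \le u_1 \le u_2,\ u_1 + u_2 \le 1\}$.

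Because $\phi$ is piecewise linear with the single breakpoint $b = \sqrt{5}-2$, the triangle $T$ decomposes into three subregions according to whether each coordinate lies below or above $b$: $R_A = \{u_2 \le b\}$, $R_B = \{u_1 \le b \le u_2\}$, and $R_C = \{u_1 \ge b\}$. On each of these pieces $D$ is affine in $(u_1, u_2)$, so its maximum is attained at a vertex and is completely determined by its gradient. The whole argument then hinges on one algebraic identity: with the stated values of $p$ and $a$, one has $(1-p)a = 2p$. (A one-line check from $1-p = 3(5-\sqrt{5})/20$.) Plugging this identity into the gradient of $D$ on each subregion, one reads off that in $R_A$ the coefficient of $u_1$ vanishes and that of $u_2$ equals $p > 0$, so $D$ is maximized on $u_2 = b$; in $R_B$ both coefficients vanish, so $D$ is constant; and in $R_C$ the coefficient of $u_1$ is $-p/2 < 0$ while that of $u_2$ is $0$, so $D$ is maximized on $u_1 = b$. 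In each case the maximum collapses to the common constant $p(2+b) = p\sqrt{5}$, which equals exactly $\alpha$.

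What remains are two side verifications: that $\phi$ is a valid probability density ($\int_0^1 \phi = 1$), which reduces to the identity $3ab/2 = 1$ using $b^2 = 9 - 4\sqrt{5}$ together with the stated $a$, and that $p \in (0,1)$. The main obstacle is not the verification itself but the choice of parameters: one uses three conditions --- normalization of $\phi$, the vanishing relation $(1-p)a = 2p$ (so that the $u_1$-coefficient dies in $R_A$), and the matching condition $p(2+b) = \alpha$, equivalently $2+b = \sqrt{5}$ --- to pin down the triple $(p,a,b)$ in the theorem. Once these parameters are fixed, the bound $D \le \alpha$ follows from the affine maximizations above, with equality on the entire ``spine'' $\{u_1 = b\} \cup \{u_2 = b\}$ of $T$, which foreshadows the tight example discussed in the introduction.
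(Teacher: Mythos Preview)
Your proof is correct and follows essentially the same route as the paper: combine the two densities from Lemma~\ref{lem:density} with weights $p$ and $1-p$, and use the balancing identity $(1-p)a = 2p$ (the paper writes it as $p = a/(2+a)$) to kill the dependence on $(u_1,u_2)$. The only cosmetic difference is that the paper notes $\phi(u) = \min\{au,\frac{a}{2}(u+b)\}$ and gets the one-line bound $\tfrac12\phi(u_1)+\phi(u_2) \le \tfrac{a}{2}(u_1+u_2+b)$ in place of your three-region case split; one minor imprecision in your write-up is that equality $D=\alpha$ actually holds on the full two-dimensional region $R_B=\{u_1\le b\le u_2\}$, not just on the lines $u_1=b$ and $u_2=b$.
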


The intuition behind this construction is as follows. Assume that $u_1 < u_2$. Considering Lemma~\ref{lem:density}, we would ideally like to design $\phi$ so that $\frac12 \phi(u_1) + \phi(u_2) = c(u_1+u_2) + d$ for some constants $c,d$, in order to combine it with the cut density of $2-u_1-u_2$ for Algorithm~\ref{alg:exp-clock}. Unfortunately, it is impossible to design $\phi$ in such a way: If such a function $\phi$ existed, it would have to satisfy $\frac12 \phi'(u_1) = \phi'(u_2)$ for every pair of values $u_1 < u_2$ which is not possible.
Still, we can try to satisfy this property for many pairs, and our construction (Figure~\ref{fig:1.309-phi}) achieves this for all pairs such that $u_1 < b < u_2$. This means that our analysis is going to be tight for all such $(u_1,u_2)$ pairs.

\begin{center}
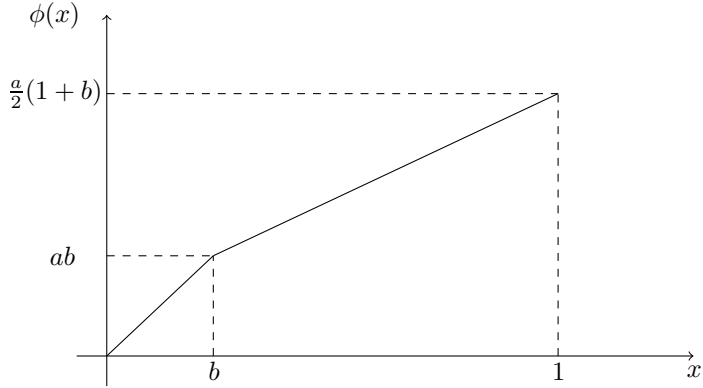
\begin{figure}[h]
\pgfmathsetmacro{\Scale}{2}
\pgfmathsetmacro{\xScale}{3}
\pgfmathsetmacro{\leftShiftLow}{1.4}
\pgfmathsetmacro{\leftShiftHigh}{1.6}


\pgfmathsetmacro{\PointB}{(sqrt(5)-2)*\xScale}
\pgfmathsetmacro{\PointOne}{\xScale}
\pgfmathsetmacro{\PointA}{(4+2*sqrt(5))/3}
\pgfmathsetmacro{\PointTopP}{\PointA*\PointB/\xScale}
\pgfmathsetmacro{\PointTopQ}{\PointA*(\PointOne+\PointB)/(2*\xScale)}

\begin{tikzpicture}[scale=\Scale]

\hspace{80pt}

\node (a1) at (\PointB - \leftShiftLow,-0.1) {$b$};
\node (a1) at (\xScale - \leftShiftLow,-0.1){1};
\node (a1) at (-0.1- \leftShiftHigh,\PointTopP){$ab$};
\node (a1) at (-0.15- \leftShiftHigh,\PointTopQ){$\frac{a}{2} (1+b)$};
\node (a1) at (1.3*\xScale - \leftShiftLow, -0.1){$x$};
\node (a1) at (-0.15- \leftShiftHigh, 1.3*\PointTopQ) {$\phi(x)$};

\draw[->] (-0.2,0) -- (1.3*\xScale,0);
\draw[->] (0,-0.2) -- (0,1.3*\PointTopQ) ;

\draw[-] (0,0) -- (\PointB,\PointTopP);
\draw[-] (\PointB,\PointTopP) -- (1*\xScale, \PointTopQ);

\draw[dashed, -] (\PointB,0) -- (\PointB, \PointTopP);
\draw[dashed, -] (0, \PointTopP) -- (\PointB, \PointTopP);

\draw[dashed, -] (\xScale, 0) -- (\xScale, \PointTopQ);
\draw[dashed, -] (0, \PointTopQ) -- (\xScale, \PointTopQ);

\end{tikzpicture}

\caption{The probability density function $\phi(x)$.}
\label{fig:1.309-phi}
\end{figure}
\end{center}


Now we calculate the optimal values of $a$ and $b$. A constraint on $a$ and $b$ is that since $\int_{0}^{1}\phi(u) \mathrm{d}u=1$, we should have $\frac12 a b^2 + \frac{a}{2} (\frac12(1+b) + b)(1-b) = \frac14 a(-b^{2}+2b+1)=1$ ($a\ge 0$ and $b\in [0,1]$). 
The following lemma gives the total cut density under Algorithm~\ref{alg:1.309}.

\begin{lemma}
The cut density for an edge of type $(1,2)$ at $(u_1,u_2,\ldots)$, $u_1 \leq u_2$ under Algorithm~\ref{alg:1.309} is at most
$$ p \cdot (2 - u_{1} - u_{2}) + (1-p) \cdot \frac{a}{2} (u_{1}+ u_{2} + b).$$
\end{lemma}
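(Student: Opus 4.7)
The plan is to reduce the lemma to a straightforward inequality on $\phi$. By definition of Algorithm~\ref{alg:1.309} and linearity of the cut density (the cut event is the probabilistic mixture, so the density splits accordingly), the overall cut density is exactly
\[
p \cdot d_{12}^{\text{exp}}(\bx) + (1-p) \cdot d_{12}^{\text{single}}(\bx) = p(2-u_1-u_2) + (1-p)\bigl(\tfrac12 \phi(u_1)+\phi(u_2)\bigr),
\]
using Lemma~\ref{lem:density} and the assumption $u_1 \le u_2$. So the whole content of the lemma is the inequality
\[
\tfrac12 \phi(u_1) + \phi(u_2) \;\le\; \tfrac{a}{2}(u_1+u_2+b) \qquad \text{for all } 0 \le u_1 \le u_2 \le 1.
\]

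I would prove this by a three-case analysis according to whether $b$ lies below, between, or above the two thresholds $u_1 \le u_2$, plugging in the two-piece definition of $\phi$ from Theorem~\ref{thm:1.309}. In the case $u_1 \le u_2 \le b$, both values are on the lower linear piece, so the left side is $\tfrac{a}{2}(u_1+2u_2)$ and the inequality reduces to $u_2 \le b$, which holds by assumption. In the borderline case $u_1 \le b \le u_2$, the left side equals $\tfrac{a}{2}u_1 + \tfrac{a}{2}(u_2+b) = \tfrac{a}{2}(u_1+u_2+b)$, matching the right side exactly (this is the tightness noted in the paragraph following Theorem~\ref{thm:1.309}). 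In the case $b \le u_1 \le u_2$, both values are on the upper linear piece, the left side becomes $\tfrac{a}{4}(u_1+2u_2+3b)$, and the inequality reduces to $b \le u_1$, which again holds.

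There is no real obstacle here: once one recognizes that the piecewise-linear $\phi$ was designed precisely so that $\tfrac12 \phi'(u_1) = \phi'(u_2)$ holds identically on the ``mixed'' regime $u_1 < b < u_2$, the remaining two cases are single-variable linear inequalities that are immediate from $u_1 \le b$ or $u_2 \ge b$. The normalization $\frac14 a(-b^2+2b+1)=1$ is not needed for the bound itself; it is only relevant to ensure $\phi$ is a valid density, and will come in later when this lemma is combined with the expressions for $a$, $b$, and $p$ to conclude the overall $\frac{3+\sqrt{5}}{4}$ approximation in Theorem~\ref{thm:1.309}.
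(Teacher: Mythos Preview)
Your proof is correct and follows the same route as the paper: reduce to the inequality $\tfrac12\phi(u_1)+\phi(u_2)\le \tfrac{a}{2}(u_1+u_2+b)$ and then combine with the exponential-clocks density via Lemma~\ref{lem:density}. The only difference is cosmetic: the paper observes that $\phi(u)=\min\{au,\tfrac{a}{2}(u+b)\}$, so the bound follows in one line from $\tfrac12\phi(u_1)\le \tfrac12 au_1$ and $\phi(u_2)\le \tfrac{a}{2}(u_2+b)$, whereas you unfold this into the three explicit cases.
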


\begin{proof}
Observe that for any $a, b \geq 0$, the density function $\phi(u)$ can be written equivalently as $\phi(u) = \min \{ au, \frac{a}{2} (u+b) \}$.
Plugging this expression into Lemma~\ref{lem:density}, the cut density under Algorithm~\ref{alg:single-threshold} is at most
$$ \frac12 \phi(u_1) + \phi(u_2) \leq \frac12 a u_1 + \frac{a}{2} (u_2 + b) = \frac{a}{2} (u_1 + u_2 + b).$$
Since we take Algorithm~\ref{alg:exp-clock} with probability $p$ and Algorithm~\ref{alg:single-threshold} with probability $1-p$, the lemma follows.
\end{proof}

Hence, we can upper-bound the cut density under Algorithm~\ref{alg:1.309} by $ p (2 - u_{1} - u_{2}) + (1-p) \frac{a}{2} (u_{1}+ u_{2} + b)$. To eliminate the dependence on $u_1+u_2$, we set $p = (1-p)\frac{a}{2}$, which means $p = a / (2+a)$. This makes the bound equal to  $2p + (1-p) \frac{a}{2} b = 2p + p b = (2+b)a/(2+a)$.
Hence, the final expression that we would like to minimize is $(2+b)a/(2+a)$ subject to the constraint $\frac14 a(-b^{2}+2b+1)=1$. The minimum is achieved at $a = \frac23(2+\sqrt{5})$ and $b=\sqrt{5}-2$, where the bound on cut density is $(2+b) a / (2+a) = \frac14 (3+\sqrt{5})$.
The probability $p$ in Algorithm~\ref{alg:1.309} is $p =a/(2+a)=\frac{1}{20}(5+3\sqrt{5})$. This proves Theorem~\ref{thm:1.309}.
\section{Descending Thresholds: $1.30217$-approximation}
\label{sec:1.302}

As we show in Appendix~\ref{sec:tight-example}, the Exponential Clocks Rounding Scheme combined with the Single Threshold Rounding Scheme (under any threshold distribution) achieves exactly the factor of $\frac{3+\sqrt{5}}{4}$ and not better.
In this section, we present an improved $\frac{10+4\sqrt{3}}{13} \approx 1.30217$-approximation for Multiway Cut. This is achieved by combining the techniques of \cite{BNS13} with a new rounding scheme that we call {\em descending thresholds}. Before we describe our algorithm, let us discuss the ideas and considerations that led us to this rounding scheme.

\subsection{Pairwise realizable distributions}
\label{sec:1.302-pairwise}

The tight example in Appendix~\ref{sec:tight-example} serves as a test of scrutiny for any candidate rounding technique: If it does not provide a factor better than $\frac{3+\sqrt{5}}{4}$ on this example, then it will not be helpful in improving the approximation factor. In particular, we know from Appendix~\ref{sec:tight-example} that if we want to use single-coordinate cuts in the form $\{i: x_{ij} \geq \theta\}$, we cannot use the same threshold $\theta$ for all terminals.

It is natural to consider different thresholds for different terminals, but the space of possibilities (all joint probability distributions of $(\theta_1, \theta_2, \ldots, \theta_k)$) seems too vast to explore. However, let us make the following observation: for an edge of type $(i,j)$, only the thresholds $\theta_i, \theta_j$ corresponding to terminals $i,j$ can potentially cut this edge. Therefore, the cut density for edges of type $(i,j)$ is primarily determined by the distribution of the thresholds $\theta_i, \theta_j$. Assuming that threshold $\theta_i$ is applied before threshold $\theta_j$ and the joint distribution of $\theta_i, \theta_j$ is given by a density function $\pi(\theta_i, \theta_j)$, we can write the following bound on the cut density of an edge of type $i,j$ located at $\bu$:
$$ \int_0^1 \pi(u_i, u) du + \int_{u_i}^{1} \pi(u, u_j) du.$$
Note the asymmetry here: the edge is always cut in coordinate $u_i$ if $\theta_i$ cuts it, but it is cut in coordinate $u_j$ only if $\theta_j$ cuts it and the edge was not captured by terminal $i$ before.

Eventually, we apply a sequence of cuts to a permutation of terminals, either an independently random one, or one correlated with the values of the thresholds in some way.  Given such a rounding scheme, we can compute a bound on the cut density as above for edges of all types. Since the bound is linear as a function of $\pi$, one can formulate a linear program that searches for the best distribution $\pi$. As shown in \cite{KKSTY04}, the symmetry of the terminals implies that whatever rounding scheme we have, we can apply it after a random re-labeling of the terminals. Therefore, we can assume that overall, every pair of thresholds $(\theta_i, \theta_j)$ has the same joint distribution, which is a symmetrization of the distribution $\pi$ above: $\rho(\theta_i,\theta_j) = \frac12 (\pi(\theta_i,\theta_j) + \pi(\theta_j,
\theta_i))$. A basic question that we encountered here is: What distributions $\rho$ are actually realizable for all pairs of terminals at the same time? 
%
In Appendix~\ref{sec:pairwise}, we provide the following characterization, at least in a discrete setting:
A distribution $\rho$ can be realized for all pairs of terminals simultaneously,
if and only if $\rho$ is a convex combination of symmetric product distributions, in other words $\rho$ is in the form 
$$ P_{ab} = \Pr[X=a,Y=b] = \sum_s \alpha_s p_s(a) p_s(b) $$
where $\alpha_s, p_s(a) \geq 0$ and $\sum_s \alpha_s = 1, \sum_a p_s(a) = 1$
(see Theorem~\ref{thm:pairwise-realizable}). In particular, this implies that the matrix $P_{ab} = \Pr[X=a, Y=b]$ must be {\em positive semidefinite}.

Going back to our original motivation, this characterization gives us a hint as to what kinds of distributions over thresholds are worth considering. It is sufficient to consider a combination of rounding schemes, where the thresholds in each scheme are chosen independently from a certain distribution, and then applied in a certain order (which is possibly correlated with their values;
 {\em this is an aspect independent of the notion of pairwise realizability}).
Unfortunately, we do not know how to search efficiently over the space of all such distributions. As far as we know, the condition of being pairwise-realizable is not equivalent to $P_{ab}$ being positive semidefinite. However, when we computed the best distribution $\pi^*$ to be combined with the Exponential Clocks Rounding Scheme, without the restriction of being pairwise-realizable, the symmetrization of this distribution $\rho^* = \frac12 (\pi^* + {\pi^*}^T)$ just happened to be in the pairwise realizable form. Furthermore, we identified this optimal solution $\pi^*$ as a convex combination of two rounding schemes, the Single Threshold Rounding Scheme under a certain distribution $\phi$, and a ``Descending Thresholds Rounding Scheme'' under a certain distribution $\psi$ --- we describe this scheme in the following section.

\subsection{Descending thresholds}



Based on the discussion above and our computational experiments, we propose the following new rounding scheme.
\begin{algorithm}[H]
\caption{Descending Thresholds Rounding Scheme}
\label{alg:descending}
\begin{algorithmic}[1]
\STATE For each $i \in [k]$, choose independently a threshold $\theta_i \in (0,1]$ with probability density $\psi(\theta)$.
\STATE Let $\sigma$ be a permutation of $[k]$ such that $\theta_{\sigma(1)} \geq \theta_{\sigma(2)} \geq \ldots \geq \theta_{\sigma(k)}$.
\FORALL{$i$ in $[k-1]$}
\STATE For every vertex $v \in V$ that has not been assigned yet, 
\STATE if $x_{v,\sigma(i)} \ge \theta_{\sigma(i)}$ then assign $v$ to terminal $\sigma(i)$.
\ENDFOR
\STATE Assign all remaining unassigned vertices to terminal $\sigma(k)$.
\end{algorithmic}
\end{algorithm}

This scheme is in fact quite easy to analyze. The cut density for an edge of given location is given by the following lemma.

\begin{lemma}
\label{lem:desc-cut-density}
The cut density under Algorithm~\ref{alg:descending} for edges of type $(1,2)$ at $(u_1,u_2,\ldots,u_k)$ such that $u_1 \leq u_2$ is at most
$$ \left(1 - \int_{u_1}^{u_2} \psi(u) du \right) \psi(u_1) + \psi(u_2).$$
\end{lemma}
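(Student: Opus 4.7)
The plan is to expand the cut probability to leading order in $\epsilon$ for an edge between $\bx=(u_1,u_2,\ldots,u_k)$ and $\bx'=\bx+\epsilon(\b1_1-\b1_2)$. Let $\tau_j(\by)=\mathbf{1}[y_j\geq\theta_j]$ and $S(\by)=\{j:\tau_j(\by)=1\}$. Because Algorithm~\ref{alg:descending} processes terminals in decreasing order of $\theta_j$ and captures a vertex at the first step where its relevant coordinate meets the current threshold, it is equivalent to the rule $\ell(\by)=\arg\max\{\theta_j:j\in S(\by)\}$ whenever $S(\by)\neq\emptyset$, and $\ell(\by)=\arg\min_j\theta_j$ otherwise. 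Since $\bx$ and $\bx'$ agree on all coordinates $j\notin\{1,2\}$, the labels can disagree only if one of the two ``flip'' events occurs: $A_1=\{\theta_1\in(u_1,u_1+\epsilon]\}$, with $\Pr[A_1]=\psi(u_1)\epsilon+o(\epsilon)$, or $A_2=\{\theta_2\in(u_2-\epsilon,u_2]\}$, with $\Pr[A_2]=\psi(u_2)\epsilon+o(\epsilon)$. The joint event $A_1\cap A_2$ contributes only $O(\epsilon^2)$ and is negligible.

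The heart of the argument is the inequality $\Pr[\ell(\bx)\neq\ell(\bx')\mid A_1]\leq 1-\int_{u_1}^{u_2}\psi(u)\,du$. Condition on $A_1$, under which $\theta_1\to u_1$ as $\epsilon\to 0$ and $S(\bx')=S(\bx)\cup\{1\}$. The key observation I would exploit is that if $\theta_2\in(u_1,u_2)$ then $\theta_2<u_2$ forces $2\in S(\bx)$, so $S(\bx)\neq\emptyset$, and $\theta_2>\theta_1$ forces $\arg\max_{j\in S(\bx')}\theta_j=\arg\max_{j\in S(\bx)}\theta_j=\ell(\bx)$; intuitively, terminal~$2$ ``shields'' terminal~$1$ from affecting the label. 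Hence a cut under $A_1$ forces $\theta_2\notin(u_1,u_2)$, and by independence this has probability $1-\int_{u_1}^{u_2}\psi(u)\,du$. The companion bound $\Pr[\ell(\bx)\neq\ell(\bx')\mid A_2]\leq 1$ is trivial.

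Combining through a union bound gives $\Pr[\ell(\bx)\neq\ell(\bx')]\leq \psi(u_1)\epsilon\bigl(1-\int_{u_1}^{u_2}\psi(u)\,du\bigr)+\psi(u_2)\epsilon+o(\epsilon)$, and dividing by $\epsilon$ and sending $\epsilon\to 0$ yields the stated bound. The only corner I need to watch is the leftover rule $\arg\min_j\theta_j$ when $S(\bx)=\emptyset$, but this poses no difficulty: $S(\bx)=\emptyset$ already implies $\theta_2>u_2$, and in particular $\theta_2\notin(u_1,u_2)$, so this case is automatically absorbed into the bound derived above. Beyond this edge case, the remaining task is just the routine verification that the $o(\epsilon)$ terms (from the Taylor expansion of $\Pr[A_i]$ and from the $A_1\cap A_2$ event) vanish in the $\limsup$ defining the cut density.
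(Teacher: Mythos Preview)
Your proof is correct and follows the same core idea as the paper's: under the event that $\theta_1$ cuts in coordinate $u_1$, the edge can only be separated if $\theta_2\notin(u_1,u_2)$, since otherwise terminal~2 is processed first (by descending order) and captures the whole edge. The paper's argument is a two-sentence version of this; your treatment is more formal in that it gives an explicit description $\ell(\by)=\arg\max\{\theta_j:j\in S(\by)\}$ of the labeling rule, carefully tracks the $o(\epsilon)$ corrections, and addresses the leftover case $S(\bx)=\emptyset$, but there is no substantive difference in approach.
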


\begin{proof}
The edge can be cut in coordinate $u_1$ only if $\theta_2$ is not between $u_1$ and $u_2$: if $u_1 < \theta_2 < u_2$, threshold $\theta_2$ would be considered before $\theta_1$ and the entire edge would be assigned to terminal $2$. This accounts for the first term. The edge can be also cut in coordinate $u_2$, which accounts for the second term. We neglect the possibility of the entire edge being assigned to another terminal, which can only decrease the probability of being cut.
\end{proof}

This lemma provides another way to explain why descending thresholds might be advantageous. The Exponential Clocks Rounding Scheme cuts edges with probability density $2 - u_1 - u_2$, so our goal in the remaining schemes is to achieve a cut density bounded by a linear function of $u_1 + u_2$. The Single Threshold Rounding Scheme achieves this with some slack, because of the asymmetry in its analysis: it cuts edges with $u_1 = u_2$ less often than edges where $u_1$ and $u_2$ are far apart. The Descending Thresholds Rounding Scheme alleviates this problem, because its cut density is lower for edges where $u_1$ and $u_2$ are far apart due to the $\int_{u_1}^{u_2} \psi(u) du$ term.

\subsection{The $1.30217$-approximate rounding scheme}

It remains to describe how we combine the Descending Thresholds Rounding Scheme with the Exponential Clocks and the Single Threshold Rounding Scheme, in particular how we choose the probability distributions $\phi$ and $\psi$. Based on computational evidence, we determined that $\phi$ is piecewise linear with a breakpoint at $b \in (0,1)$, and the cut density due to the Single Threshold Rounding Scheme is $\frac12 \phi(u_1) + \phi(u_2)$, piecewise linear in $u_1$ and $u_2$. Ideally we want a function in the form $a(u_1 + u_2)$, but as we saw in Section~\ref{sec:1.309} that is hard to achieve. This is where the scheme of Descending Thresholds comes in: We choose the probability distribution $\psi$ for decreasing thresholds uniform in the interval $[0,b]$, so that the cut density due to this scheme, $(1 - \int_{u_1}^{u_2} \psi(u) du) \psi(u_1) + \psi(u_2)$, is again piece\-wise linear in $u_1$ and $u_2$. Then we are able to balance the parameters so that the total cut density is a constant throughout most of the simplex. Our algorithm will be in the following form.
\begin{center}
\vspace{-0.15in}
\begin{algorithm}[H]
\caption{}
\label{alg:1.302}
\begin{algorithmic}
\STATE
\begin{itemize}
\item With probability $p_{1}$, choose the Exponential Clocks Rounding Scheme (Algorithm~\ref{alg:exp-clock}).
\item With probability $p_{2}$, choose the Single Threshold Rounding Scheme (Algorithm~\ref{alg:single-threshold}), where the threshold is chosen with the following probability density:
\begin{compactitem}
\item For $0 \leq u \leq b$, $\phi(u) = a~u$
\item For $b < u \leq 1$, $\phi(u) = c~u+d$
\end{compactitem}
\item With probability $p_{3}$, choose the Descending Thresholds Rounding Scheme (Algorithm~\ref{alg:descending}), where the thresholds are chosen with the following probability density:
\begin{compactitem}
\item For $0 \leq u \leq b$, $\psi(u) = \frac{1}{b}$
\item For $b < u \leq 1$, $\psi(u) = 0$
\end{compactitem}
\end{itemize}
\end{algorithmic}
\end{algorithm}
\end{center}
\vspace{-0.1in}
Using Lemma~\ref{lem:density} and Lemma~\ref{lem:desc-cut-density},
the cut density for edges of type $(1,2)$ located at $(u_1,u_2,\ldots)$ is equal to $q = q_1 + q_2 + q_3$, where:

\noindent
{\bf Case 1:} $0\le u_{1}\le u_{2} \le b$.
\begin{compactitem}
\item Exponential Clocks: $q_1 = p_{1}~(2-u_{1}-u_{2})$.
\item Single Threshold: $q_2 = p_{2} ~ (\frac{a}{2}~u_{1} + a~u_{2})$.
\item Descending Thresholds: $q_3 = p_{3}~((1- \frac{u_{2}-u_{1}}{b})\frac{1}{b} + \frac{1}{b})$.
\end{compactitem}
\vspace{-0.05in}
 \begin{align*}q = 2p_{1} + p_{3} \frac{2}{b}  + \left(-p_{1} + p_{2} \frac{a}{2} + p_{3} \frac{1}{b^{2}}\right) u_{1} + \left(-p_{1} + p_{2} a -p_{3}\frac{1}{b^{2}}\right) u_{2}~.\end{align*}
\noindent
{\bf Case 2:} $0\le u_{1} \le b < u_{2} \le 1$.
\begin{compactitem}
\item Exponential Clocks: $q_1 = p_{1}~(2-u_{1}-u_{2})$.
\item Single Threshold: $q_2 = p_{2}~(\frac{a}{2}~u_{1} + c~u_{2}+d)$.
\item Descending Thresholds: $q_3 = p_{3}~\frac{u_{1}}{b^{2}}$.
\end{compactitem}
$$q = 2p_{1} + p_{2}d + \left(-p_{1} + p_{2} \frac{a}{2} + p_{3}\frac{1}{b^{2}}\right) u_{1} + \left(-p_{1} + p_{2}c \right) u_{2}~.$$

\noindent
{\bf Case 3:} $b < u_{1} \le u_{2} \le 1$.
\begin{compactitem}
\item Exponential Clocks: $q_1 = p_{1}~(2-u_{1}-u_{2})$.
\item Single Threshold: $q_2 = p_{2}~(\frac{c}{2}u_{1} + c~u_{2} + \frac{3}{2}d)$.
\item Descending Thresholds: $q_3 = 0$.
\end{compactitem}
$$ q = 2p_{1} + \frac{3}{2}p_{2}d + \left(-p_{1} + p_{2} \frac{c}{2} \right) u_{1} + \left(-p_{1} + p_{2} c \right) u_{2}~.$$

\noindent
{\bf Optimizing the parameters.}
Let us denote $\tilde{a} = p_{2}a$, $\tilde{c} = p_{2}c$ and $\tilde{d} = p_{2}d$. Given the cut density formulas above, we would like to minimize $z$ subject to the constraints

\begin{eqnarray}
\forall~0\le u_{1}\le u_{2} \le b; & & z \ge 2p_{1} + \frac{2}{b} p_{3} + (-p_{1} + \frac12 \tilde{a} + \frac{1}{b^{2}} p_{3}) u_{1} + (-p_{1} + \tilde{a} - \frac{1}{b^2} p_{3}) u_{2} \label{eq:zeroToB} \\
 \forall~0\le u_{1} \le b \le u_{2} \le 1; & & z \ge 2p_{1} + \tilde{d} + (-p_{1} + \frac12 \tilde{a} + \frac{1}{b^{2}} p_{3}) u_{1} + (-p_{1} + \tilde{c}) u_{2} \label{eq:zeroToBToOne} \\
\forall~b\le u_{1} \le u_{2} \le 1; & & z \ge 2p_{1} + \frac{3}{2} \tilde{d} + (-p_{1} + \frac12 \tilde{c}) u_{1} + (-p_{1} + \tilde{c}) u_{2}\label{eq:bToOne} \\
& & p_{1} + \frac12 \tilde{a}b^{2} + \frac12 \tilde{c}(1-b^{2}) + (1-b) \tilde{d} + p_{3} = 1\label{eq:sum-to-one}\\
& & 0 \le b \le 1 \\
& & p_{1},p_{3}, \tilde{a}, \tilde{c}, \tilde{d} \ge 0
\end{eqnarray}

Equation~(\ref{eq:sum-to-one}) follows from combining the probability normalization conditions $p_{1} + p_{2} + p_{3} = 1$ and $\int_0^1 \phi(u) du = \frac12 ab^{2} + \frac12 c(1-b^{2}) + (1-b)d=1$. In order to eliminate the variables $u_1, u_2$, we impose the following conditions:
 $p_1 = \tilde{c}$,
 $p_{3} = \frac{b^{2}}{3}\tilde{c}$,
 $\tilde{a} = \frac{4}{3}\tilde{c}$,
 and $\tilde{d} = \frac{2}{3}b\tilde{c}$.
This replaces all the constraints on $z$ by $z \geq (2 + \frac{2}{3} b) \tilde{c}$. (In constraint~(\ref{eq:bToOne}), the right-hand side becomes $2 \tilde{c} + b \tilde{c} - \frac12 \tilde{c} u_1$ which is dominated by $2 \tilde{c} + \frac12 b \tilde{c}$ for $u_1 \geq b$.) Constraint (\ref{eq:sum-to-one}) becomes $\tilde{c} (\frac32 + \frac23 b - \frac16 b^2) = 1$.

Therefore, we want to minimize $(2 + \frac23 b) \tilde{c}$ subject to $\tilde{c} (\frac32 + \frac23 b - \frac16 b^2) = 1$, which can be done by hand.  The optimal solution is $b = 2 \sqrt{3} - 3 ~(\approx 0.464102)$ and $\tilde{c} = (6 + 5 \sqrt{3})/26 ~(\approx 0.563856)$, which gives
$z = (10+4\sqrt{3})/13 \leq 1.30217$. The value of the other parameters can be derived from the equations above:
 $p_{1} = (6+5\sqrt{3})/26~(\approx 0.563856)$, $p_2 = (19-8\sqrt{3})/13~(\approx 0.395661)$, $p_{3} = (11\sqrt{3} - 18)/26~(\approx 0.040483)$, $\tilde{a} = (12+10\sqrt{3})/39~(\approx 0.75181)$ and $\tilde{d} = (4-\sqrt{3})/13~(\approx 0.17446)$.
It can be verified that the cut density in all cases is bounded by $z = (10+4\sqrt{3})/13$. We summarize in the following theorem.

\begin{theorem}
\label{thm:1.302}
Algorithm~\ref{alg:1.302} with parameters $p_1 = (6+5\sqrt{3})/26$, $p_2 = (19-8\sqrt{3})/13$, $p_3 = (11\sqrt{3}-18)/13$, $\tilde{a} = (12+10\sqrt{3})/39$, $b = 2\sqrt{3}-3$, $\tilde{c} = (6+5\sqrt{3})/26$ and $\tilde{d} = (4-\sqrt{3})/13$
gives a $(10 + 4\sqrt{3})/13 \simeq 1.30217$-approximation for the Multiway Cut problem.
\end{theorem}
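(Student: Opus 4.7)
The plan is to reduce the claim to a pointwise bound on the cut density (using the fact from Section~2 that an $\alpha$-approximation follows once $d_{ij}(\bx) \le \alpha$ for every pair $(i,j)$ and every $\bx \in \Delta$) and then verify this bound by plugging the specific parameter values into the three case expressions for $q=q_1+q_2+q_3$ that have already been written down above. By the symmetry of the algorithm under relabeling of coordinates it suffices to treat edges of type $(1,2)$ with $u_1 \le u_2$, so only Cases~1, 2, and~3 have to be checked.

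First I would verify feasibility: that with $p_1 = (6+5\sqrt{3})/26$, $p_3 = (11\sqrt{3}-18)/26$, and $p_2 = 1 - p_1 - p_3 = (19-8\sqrt{3})/13$, all three probabilities lie in $[0,1]$ and sum to~$1$, and that the density $\phi$ integrates to $1$, i.e.\ $\frac12 \tilde a b^2 + \frac12 \tilde c (1-b^2) + (1-b)\tilde d = p_2$, which after multiplying through is exactly constraint~(\ref{eq:sum-to-one}) with the stated values of $b, \tilde a, \tilde c, \tilde d$. Since this reduces to a single equation in $\tilde{c}$ and $b$, namely $\tilde c(\tfrac32 + \tfrac23 b - \tfrac16 b^2) = 1$, a direct substitution of $b = 2\sqrt3-3$ and $\tilde c = (6+5\sqrt3)/26$ confirms normalization.

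Next I would check the three cut-density cases in turn. In Case~1 ($0\le u_1 \le u_2 \le b$), substituting the parameter relations $p_1=\tilde c$, $p_3 = b^2\tilde c/3$, $\tilde a = 4\tilde c/3$ into the coefficient of $u_1$ gives $-\tilde c + \tfrac12\cdot\tfrac{4}{3}\tilde c + \tfrac{1}{b^2}\cdot\tfrac{b^2}{3}\tilde c = 0$, and a parallel computation shows the coefficient of $u_2$ also vanishes, leaving $q = 2p_1 + \tfrac{2}{b}p_3 = 2\tilde c + \tfrac{2b}{3}\tilde c = (2+\tfrac23 b)\tilde c$. In Case~2, the coefficient of $u_1$ again equals $-\tilde c + \tfrac12\tilde a + \tfrac{1}{b^2}p_3 = 0$ and the coefficient of $u_2$ equals $-p_1 + \tilde c = 0$, so $q = 2p_1 + \tilde d = 2\tilde c + \tfrac23 b \tilde c = (2+\tfrac23 b)\tilde c$ as well. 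In Case~3 the coefficient of $u_1$ is $-p_1 + \tfrac12\tilde c = -\tfrac12\tilde c < 0$ and the coefficient of $u_2$ is $0$, so $q$ is maximized at the left endpoint $u_1 = b$, giving $q \le 2p_1 + \tfrac32\tilde d - \tfrac12 b \tilde c = 2\tilde c + b\tilde c - \tfrac12 b \tilde c = (2+\tfrac12 b)\tilde c \le (2+\tfrac23 b)\tilde c$.

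Hence in every case $q \le (2 + \tfrac23 b)\tilde c$, and plugging in $b = 2\sqrt3-3$ and $\tilde c = (6+5\sqrt3)/26$ evaluates this to exactly $(10+4\sqrt3)/13$. The main obstacle, such as it is, is purely algebraic bookkeeping: one must be careful that the parameter-elimination relations (which were imposed precisely to kill the $u_1,u_2$ coefficients in Cases~1 and~2) are simultaneously consistent with the normalization~(\ref{eq:sum-to-one}) and with nonnegativity of $p_3 = (11\sqrt3-18)/26 > 0$, and that in Case~3 the residual linear dependence on $u_1$ has the right sign to be bounded by the Case~1,2 value. All of this is verified by the substitutions above, which completes the proof of Theorem~\ref{thm:1.302}.
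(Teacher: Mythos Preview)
Your proposal is correct and follows essentially the same route as the paper: you substitute the parameter relations $p_1=\tilde c$, $p_3=\tfrac{b^2}{3}\tilde c$, $\tilde a=\tfrac{4}{3}\tilde c$, $\tilde d=\tfrac{2}{3}b\tilde c$ into the three case expressions for the cut density, verify that the $u_1,u_2$ coefficients vanish in Cases~1 and~2 and that Case~3 is dominated at $u_1=b$, and evaluate the resulting constant $(2+\tfrac23 b)\tilde c$ to $(10+4\sqrt3)/13$. You also correctly use $p_3=(11\sqrt3-18)/26$ in your feasibility check, matching the derivation preceding the theorem (the value $(11\sqrt3-18)/13$ in the theorem statement is a typo).
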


\section{Independent Thresholds: Getting Below $1.3$}
\label{sec:below-1.3}

As we show in  Appendix~\ref{sec:1.302-tight}, the approximation factor of $\frac{10+4\sqrt{3}}{13} \approx 1.30217$ is the best one we can achieve if we combine Exponential Clocks with any sequence of $k$ single-variable cuts, as long as the analysis for edges of type $(i,j)$ works only with variables $u_i,u_j$. An obvious question is what happens if we consider the role of coordinates other than $u_i, u_j$. In particular, an edge of type $(i,j)$ can be ``captured" by another terminal $\ell$, in the sense that both of its endpoints are labeled $\ell$ before coordinates $u_i, u_j$ are even considered. \cite{KKSTY04} rely on this argument to improve their analysis; they use a rounding scheme in the following form.

\begin{algorithm}[H]
\caption{Independent Thresholds Rounding Scheme}
\label{alg:independent}
\begin{algorithmic}[1]
\STATE For each $i \in [k]$, choose independently a threshold $\theta_i \in (0,1]$ with probability density $\xi(\theta)$.
\STATE Let $\sigma$ be a uniformly random permutation of $[k]$.
\FORALL{$i$ in $[k-1]$}
\STATE For every vertex $v \in V$ that has not been assigned yet, 
\STATE if $x_{v,\sigma(i)} \ge \theta_{\sigma(i)}$ then assign $v$ to terminal $\sigma(i)$.
\ENDFOR
\STATE Assign all remaining unassigned vertices to terminal $\sigma(k)$.
\end{algorithmic}
\end{algorithm}

In this section, we pursue further improvements to the approximation ratio, using the Independent Thresholds Rounding Scheme. Unfortunately, the inclusion of further coordinates in the analysis leads to more involved expressions which are non-linear in the underlying distributions. In contrast to the results above, we are no longer able to find the best parameters and verify the approximation ratio by hand. Our algorithm in this section has been found with the help of an LP solver; we provide more details on our computational experiments below. Let us remark that the LP solution involves a discretized description of a probability distribution $\phi$, which makes it difficult to even present a description of the algorithm in a concise form. However, we have been able to find an (approximate) closed-form expression for $\phi$ and describe the algorithm in a compact form (see below). This incurs a small loss in the approximation factor. In this section, we do not claim that our approximation is optimal for any particular class of rounding schemes; our aim is to demonstrate that the approximation factor can be pushed below $1.3$.

{\em Why the gains are small.} Before we get into the details of our final algorithm and its analysis, let us comment on why it seems difficult to achieve very substantial gains at this point. The gain from considering coordinates other than $u_i, u_j$ comes from the fact that a threshold $\theta_\ell$ might capture an edge of type $(i,j)$ if the respective coordinate $u_\ell$ is above $\theta_\ell$. However, given $u_i$ and $u_j$, all the other coordinates could be very small, namely $u_\ell = (1-u_i-u_j) / (k-2)$. For the rounding schemes that we considered above (Single Threshold and Descending Thresholds), this means that we do not get any improvement, because for any $u_i,u_j>0$, we can have $u_\ell < u_i, u_j$ for all $\ell \neq i,j$; then, it never happens that terminal $\ell$ captures the edge before $i$ or $j$. In order to exploit this argument, we need to use another rounding scheme, such as Independent Thresholds. (Other schemes are possible but computational experiments show that the Independent Thresholds Rounding Scheme is the most effective one out of those we were able to analyze.)  Moreover, the probability density $\xi(u)$ should be relatively high for $u \rightarrow 0$, to make the probability $\Pr[\theta_\ell \leq u_\ell]$ significant. On the other hand, this is somewhat contrary to the goal of balancing cut densities with the Exponential Clocks Rounding Scheme where the cut density is $2-u_i-u_j$,~i.e.~maximized for edges with $u_i, u_j$ close to $0$. In other words, we cannot use Independent Thresholds with a very high density near $0$ because this would have significant impact on the cut density of edges with $u_i,u_j$ close to $0$. This is our interpretation of why the benefit of the Independent Thresholds Rounding Scheme on top of the other techniques is rather limited.

\subsection{Our $1.2965$-approximation algorithm}

Based on computational experiments, we propose a rounding scheme in the following form, with a probability density $\phi$, parameter $b \in [0,1]$ and probabilities $p_1+p_2+p_3+p_4 = 1$ to be determined.
\vspace{-0.1in}
\begin{center}
\begin{algorithm}[H]
\caption{}
\label{alg:1.3}
\begin{algorithmic}
\STATE
\begin{compactitem}
\item With probability $p_{1}$, choose the Exponential Clocks Rounding Scheme (Algorithm~\ref{alg:exp-clock}).
\item With probability $p_{2}$, choose the Single Threshold Rounding Scheme (Algorithm~\ref{alg:single-threshold}), where the threshold is chosen with probability density $\phi$.
\item With probability $p_{3}$, choose the Descending Thresholds Rounding Scheme (Algorithm~\ref{alg:descending}), where the thresholds are chosen uniformly in $[0,b]$.
\item With probability $p_{4}$, choose the Independent Thresholds Rounding Scheme (Algorithm~\ref{alg:independent}), where the thresholds are chosen uniformly in $[0,b]$.
\end{compactitem}
\end{algorithmic}
\end{algorithm}
\end{center}

We defer the proofs of the results in this section to the full version of the paper. For the following result, we appeal to \cite{KKSTY04} for the analysis of the Independent Thresholds Rounding Scheme.

\begin{lemma}
\label{lem:Karger-cut}
Given a point $(u_1,u_2,\ldots,u_k) \in \Delta$ and the parameter $b$ of Algorithm~\ref{alg:independent}, let $a = \frac{1-u_1-u_2}{b}$.
If $a>0$, the cut density for an edge of type $(1,2)$ located at $(u_1,u_2,\ldots,u_k)$ under the Independent Thresholds Rounding Scheme with parameter $b$ is at most
\begin{itemize}
\item $\frac{2(1-e^{-a})}{ab} - \frac{(u_1+u_2)(1 - (1+a) e^{-a})}{a^2 b^2}$, if all the coordinates $u_1,\ldots,u_k$ are in $[0,b]$.
\item $\frac{a+e^{-a}-1}{a^2 b}$, if $u_1 \in [0,b], u_2 \in (b,1]$ and $u_\ell \in [0,b]$ for all $\ell \geq 3$.
\item $\frac{1}{b} - \frac{u_1+u_2}{6b^2}$, if $u_1,u_2 \in [0,b]$ and $u_\ell \in (b,1]$ for some $\ell \geq 3$.
\item $\frac{1}{3b}$, if $u_1 \in [0,b], u_2 \in (b,1]$ and $u_\ell \in (b,1]$ for some $\ell \geq 3$.
\item $0$, if $u_1, u_2 \in (b,1]$.
\end{itemize}
For $a=0$, the cut density is given by the limit of the expressions above as $a \rightarrow 0$.
\end{lemma}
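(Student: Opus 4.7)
The plan is to decompose the cut density into a ``coordinate-$1$'' contribution and a ``coordinate-$2$'' contribution, and to bound each via the Poisson-limit analysis from \cite{KKSTY04}. For the infinitesimal edge between $\bx_v=(u_1,u_2,u_3,\ldots,u_k)$ and $\bx_{v'}=(u_1+\epsilon,u_2-\epsilon,u_3,\ldots,u_k)$, every terminal $\ell\ge 3$ has the same coordinate at both endpoints, so to leading order in $\epsilon$ the edge can be cut only in the two disjoint events $\theta_1\in(u_1,u_1+\epsilon]$ (putting $1\in W_{v'}\setminus W_v$) and $\theta_2\in(u_2-\epsilon,u_2]$ (putting $2\in W_v\setminus W_{v'}$), where $W_w:=\{\ell:\theta_\ell\le x_{w,\ell}\}$ is the set of terminals that ``want'' vertex $w$. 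Each event has probability $\epsilon/b$, and conditional on the first, the edge is cut iff terminal $1$ comes first among $W_{v'}$ in the random permutation $\sigma$; this yields a coordinate-$1$ cut density of $\tfrac{1}{b}\,\E{1/(N_1+1)}$ with $N_1:=|W_{v'}\setminus\{1\}|$, and symmetrically $\tfrac{1}{b}\,\E{1/(N_2+1)}$ for coordinate-$2$ with $N_2:=|W_v\setminus\{2\}|$.

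Next I would split $N_i$ into its guaranteed and random parts. Any terminal $\ell$ with $u_\ell>b$ satisfies $\theta_\ell\le b<u_\ell$ and always lies in $W_v\cap W_{v'}$; any $\ell$ with $u_\ell\le b$ contributes an independent $\mathrm{Bernoulli}(u_\ell/b)$ to $N_i$. The key input I would take from \cite{KKSTY04} is that, because $1/(n+c)$ is convex and a sum of independent Bernoullis with fixed mean is dominated in convex order by the Poisson distribution of matching mean as the individual parameters tend to $0$, the upper bound on $\E{1/(N_i+1)}$ is attained in the limit where the light-terminal contribution becomes a Poisson random variable $Z$ with parameter equal to the total light-terminal mass divided by $b$.

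With this step granted, each case is a brief Poisson computation. In Case~$1$ (all $u_\ell\le b$), $N_1=\mathbf 1[\theta_2\le u_2]+Z$ with $Z\sim\mathrm{Poisson}(a)$ and $a=(1-u_1-u_2)/b$; the identities $\E{1/(Z+1)}=(1-e^{-a})/a$ and $\E{1/(Z+2)}=(a-1+e^{-a})/a^2$ (both derivable via $1/(n+c)=\int_0^1 t^{n+c-1}\,dt$) combine with the symmetric coordinate-$2$ term to give the stated formula. Case~$2$ specializes this: terminal $2$ is guaranteed in $W_v$ (so only $\E{1/(Z+2)}$ survives), and the coordinate-$2$ contribution vanishes because $\theta_2\in[0,b]$ cannot reach $(u_2-\epsilon,u_2]$. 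For Cases~$3$ and $4$, let $L=\{\ell\ge 3:u_\ell>b\}\ne\emptyset$; every element of $L$ is guaranteed in $N_i$, and the worst configuration of the residual mass is $|L|=1$ with essentially all of $1-u_1-u_2$ concentrated on that single terminal, forcing the light Poisson parameter to zero. A two-point calculation then gives $\E{1/(N_1+1)}=\tfrac12(1-u_2/b)+\tfrac13(u_2/b)=\tfrac12-u_2/(6b)$, yielding Case~$3$ after symmetrization; Case~$4$ reduces similarly to the constant $\tfrac13$ on the coordinate-$1$ side with no coordinate-$2$ contribution, and Case~$5$ is immediate since $u_1,u_2>b$ kills both. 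The $a\to 0$ limits follow by continuity.

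The main obstacle is the Poisson-limit monotonicity input: rigorously showing that $\E{1/(N_i+1)}$ over sums of independent Bernoullis with fixed mean is maximized in the Poisson limit, together with the corresponding worst-case concentration of heavy mass into a single terminal in Cases~$3$ and $4$. Both facets are supplied by the analysis of \cite{KKSTY04}; the remainder is bookkeeping with Poisson moments and verifying that the resulting expressions match the five cases in the lemma.
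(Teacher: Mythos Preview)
Your proposal is correct and follows essentially the same approach as the paper. Both arguments defer to \cite{KKSTY04} for the critical step (worst-case placement of the remaining mass and the Poisson limit), and then substitute the particular threshold distribution $F(u)=\min\{u/b,1\}$; the paper quotes the resulting $C_\infty$ formula (their Lemma~6.6) for Cases~1--2 and the $d_3$ formula (their Lemma~6.4, reducing to three effective terminals with $u_3=1-u_1-u_2$) for Cases~3--4, whereas you rederive these quantities from the $\E{1/(N+1)}$ viewpoint and the Poisson identities, arriving at the same expressions.
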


\begin{proof}
For the first and second case, we refer to the proof of Lemma 6.6 in \cite{KKSTY04}. They prove that if $u_\ell \in [0,b]$ for all $\ell \geq 3$, then the maximum cut density is achieved when $u_\ell = (1-u_1-u_2)/(k-2)$ for all $\ell \geq 3$, and it is bounded by the following formula:
\begin{align*}
C_\infty(u_1,u_2) &= [F'(u_1) + F'(u_2)] \times \frac{1-e^{-a}}{a} 
 - [F'(u_1) F(u_2) + F'(u_2) F(u_1)] \times \frac{1 - (1+a) e^{-a}}{a^2}.
\end{align*}
Here, $F$ is the cumulative distribution function for picking the independent thresholds, namely $F(u) = \min \{u/b, 1\}$. 
Hence, we have $F'(u) = 1/b$ for $u \in [0,b]$ and $F'(u) = 0$ for $u \in (b,1]$.

In the case where $u_1, u_2 \in [0,b]$, we obtain
$$ C_\infty(u_1,u_2) = \frac{2}{b} \times \frac{1-e^{-a}}{a} - \frac{u_1 + u_2}{b^2}  \times \frac{1-(1+a) e^{-a}}{a^2}. $$

In the case where $u_1 \in [0,b], u_2 \in (b,1]$, we obtain
$$ C_\infty(u_1,u_2) = \frac{1}{b} \times \frac{1-e^{-a}}{a} - \frac{1}{b} \times \frac{1-(1+a) e^{-a}}{a^2} = \frac{1}{b} \times \frac{a+e^{-a}-1}{a^2}.$$

Let us now consider the case where $u_\ell > b$ for some $\ell \geq 3$, w.l.o.g.~$u_3 > b$. As shown in Lemma 6.4 in \cite{KKSTY04}, the maximum cut density for such edges is achieved when $u_3 = 1-u_1-u_2 > b$, and then it is equal to
\begin{eqnarray*}
C_3(u_1,u_2,1-u_1-u_2)  & = & d_3(u_1,u_2,1-u_1-u_2)  \\
& = &   \frac16 ((F'(u_1) + (1-F(u_1)) F'(u_2))  + (F'(u_2) + (1-F(u_2)) F'(u_1)) \\
& & + (F'(u_1) + 0 \cdot F'(u_2)) + (F'(u_2) + 0 \cdot F'(u_1)) + 0 + 0)
\end{eqnarray*}
from equation (1) in \cite{KKSTY04}. For $u_1, u_2 \in [0,b]$, we have $F'(u_1) = F'(u_2) = 1/b$ and $F(u_1) = u_1/b, F(u_2) = u_2/b$, hence
\begin{align*}
C_3(u_1,u_2,1-u_1-u_2) &= \frac{1}{6} \left( \frac{1}{b} + \left(1 - \frac{u_1}{b} \right) \frac{1}{b}
  + \frac{1}{b} + \left(1 - \frac{u_2}{b} \right) \frac{1}{b} + \frac{1}{b} + \frac{1}{b}\right) 
 = \frac{1}{b} - \frac{u_1+u_2}{6b^2}.
\end{align*}
For $u_1 \in [0,b], u_2 \in (b,1]$ (which can only occur if $b < 1/2$), we get $F(u_1) = u_1/b, F'(u_1) = 1/b, F(u_2) = 1$ and $F'(u_2) = 0$, hence
$$ C_3(u_1,u_2,1-u_1-u_2) = \frac16 \left( \frac{1}{b} + \frac{1}{b} \right) = \frac{1}{3b}.$$

If $u_1,u_2 \in (b,1]$ then the edge is never cut because thresholds are chosen only in $[0,b]$.

Finally, the cut density for $a=0$ follows by continuity of the density function, or can be verified directly from the properties of the rounding scheme.
\end{proof}

We also refine the analysis of the Single Threshold and Descending Thresholds Rounding Schemes, depending on the value of the remaining coordinates.

\begin{lemma}
\label{lem:single-cut}
For an edge of type $(1,2)$ located at \\$(u_1,u_2,\ldots,u_k)$, the cut density under the Single Threshold Rounding Scheme is at most
\begin{itemize}
\item $\frac12 \phi(u_1) + \phi(u_2)$, if $u_\ell \leq u_1 \leq u_2$ for all $\ell \geq 3$.
\item $\frac13 \phi(u_1) + \phi(u_2)$, if $u_1 < u_\ell \leq u_2$ for some $\ell \geq 3$.
\item $\frac13 \phi(u_1) + \frac12 \phi(u_2)$, if $u_1 \leq u_2 < u_\ell$ for some $\ell \geq 3$.
\end{itemize}
\end{lemma}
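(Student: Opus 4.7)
The plan is to decompose the cut event according to whether coordinate $1$ or coordinate $2$ of the threshold scheme splits the edge, and then, for each of the three cases, to count the competing terminals that could otherwise capture both endpoints first. Writing the edge endpoints in the standard form as $(u_1,u_2,u_3,\ldots,u_k)$ and $(u_1+\epsilon,u_2-\epsilon,u_3,\ldots,u_k)$, the edge can be cut only if the single threshold $\theta$ falls in $(u_1,u_1+\epsilon]$ (``coordinate-$1$ cut'') or in $(u_2-\epsilon,u_2]$ (``coordinate-$2$ cut''); in the limit $\epsilon \to 0$ these two contribute densities $\phi(u_1)$ and $\phi(u_2)$ respectively, each weighted by the probability that the corresponding cut actually takes place rather than being preempted.

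For the coordinate-$1$ cut I would fix $\theta$ just above $u_1$ and observe that every terminal $\ell \neq 1$ with $u_\ell > u_1$ has $x_\ell \geq \theta$ at both endpoints for $\epsilon$ small (trivially for $\ell \geq 3$ since those coordinates are unchanged, and for $\ell = 2$ because $u_2 - \epsilon > \theta$ in the limit); such an $\ell$ processed before terminal $1$ would assign both endpoints to $\ell$. Hence the edge is cut at coordinate $1$ iff terminal $1$ precedes every such competitor in the random permutation $\sigma$. With $m = |\{\ell \geq 3 : u_\ell > u_1\}|$ plus the fixed competitor $\ell=2$, this probability equals $\frac{1}{m+2}$, which is $\frac{1}{2}$ in Case~1 ($m = 0$) and at most $\frac{1}{3}$ in Cases~2 and 3 (where the hypothesis forces $m \geq 1$). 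An analogous computation for the coordinate-$2$ cut, with $\theta$ just below $u_2$, shows that terminal $1$ is never a competitor (because $u_1 + \epsilon < \theta$ in the limit when $u_1 < u_2$), so only terminals $\ell \geq 3$ with $u_\ell \geq u_2$ compete; writing $m'$ for their count, the probability equals $\frac{1}{m'+1}$, namely $1$ in Cases~1 and 2 (where generically $m' = 0$) and at most $\frac{1}{2}$ in Case~3 ($m' \geq 1$). Summing the coordinate-$1$ and coordinate-$2$ contributions recovers the three stated bounds.

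I do not expect any single step to be a real obstacle; the argument is just a careful enumeration over $\sigma$ within each case. The only bookkeeping subtleties are handling ties $u_\ell \in \{u_1,u_2\}$ via strict inequalities in the limit (using the fact that the cut density is defined as a $\limsup$ so measure-zero tie configurations are harmless), and noting that in Case~2 the quoted bound $\phi(u_2)$ is slightly loose if some $u_\ell = u_2$ (the true value is then at most $\frac{1}{2}\phi(u_2)$), but remains a valid upper bound as needed for the lemma's ``at most'' statement.
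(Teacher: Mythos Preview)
Your proposal is correct and follows essentially the same approach as the paper: decompose the cut event into the coordinate-$1$ and coordinate-$2$ contributions, then in each case identify the competing terminals (terminal $2$ and/or some $\ell\ge 3$ with $u_\ell$ above the relevant threshold) and use the uniform random permutation to bound the probability that terminal $1$ (resp.\ $2$) is processed first by $\tfrac{1}{m+2}$ (resp.\ $\tfrac{1}{m'+1}$). The paper's proof is terser---it invokes Lemma~\ref{lem:density} for Case~1 and then names only a single competitor $\ell$ in Cases~2 and~3 to get the $\tfrac13$ and $\tfrac12$ factors---whereas you are more explicit about the $\epsilon\to 0$ limit and the handling of ties, but the argument is the same.
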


\begin{proof}
The first case follows from Lemma~\ref{lem:density}. We use similar reasoning to handle the other two cases. If there is a coordinate $u_\ell$, $u_1 < u_\ell \leq u_2$, then one of the terminals $\{2,\ell\}$ is considered before $1$, then the edge cannot be cut in coordinate $u_1$. Hence, we get a contribution of $\phi(u_1)$ only if $1$ appears before both $2$ and $\ell$, which happens with probability $1/3$.

iIf $u_1 \leq u_2 < u_\ell$ for some $\ell \geq 3$, then we use the same reasoning for cutting the edge in coordinate $u_1$. In addition, the edge can be cut in coordinate $u_2$ only if terminal $2$ is considered before $\ell$, which happens with probability $1/2$. Hence we get a contribution of $\phi(u_1)$ with probability $1/3$ and $\phi(u_2)$ with probability $1/2$.
\end{proof}

\begin{lemma}
\label{lem:descending-cut}
For an edge of type $(1,2)$ located at \\$(u_1,u_2,\ldots,u_k)$, the cut density under the Descending Thresholds Rounding Scheme is at most
\begin{itemize}
\item $(1 - \int_{u_1}^{u_2} \psi(u) du) \psi(u_1) + \psi(u_2)$, if $u_\ell \leq u_1 \leq u_2$ for all $\ell \geq 3$.
\item $(1 - \int_{u_1}^{u_2} \psi(u) du)(1 - \int_{u_1}^{u_\ell} \psi(u) du) \psi(u_1) + \psi(u_2)$, if $u_1 < u_\ell \leq u_2$ for some $\ell \geq 3$.
\item $(1 - \int_{u_1}^{u_2} \psi(u) du)(1 - \int_{u_1}^{u_\ell} \psi(u) du) \psi(u_1)$ \\$+ (1 - \int_{u_2}^{u_\ell} \psi(u) du) \psi(u_2)$, if $u_1 \leq u_2 <  u_\ell$ for some $\ell \geq 3$.
\end{itemize}
\end{lemma}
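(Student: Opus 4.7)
The plan is to extend the argument used for Lemma~\ref{lem:desc-cut-density}, now tracking explicitly which of the ``spectator'' coordinates $u_\ell$ ($\ell \geq 3$) can possibly preempt the cut. As in that proof, I consider an edge of type $(1,2)$ with endpoints $\bx_v = (u_1,u_2,u_3,\ldots,u_k)$ and $\bx_{v'} = (u_1+\epsilon, u_2-\epsilon, u_3, \ldots, u_k)$, take the infinitesimal limit $\epsilon \to 0$, and observe that the edge can only be cut in coordinate $1$ or coordinate $2$.

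First I would analyze the event that coordinate $1$ cuts the edge. This requires $\theta_1 \in (u_1, u_1+\epsilon]$, which contributes a density of $\psi(u_1)$, and it also requires that no terminal $j \neq 1$ with $\theta_j > \theta_1$ has already captured both endpoints. A terminal $j \neq 1$ captures both endpoints precisely when $\theta_j \leq u_j$ (taking the limit $\epsilon \to 0$ in the $j=2$ case). Since $\theta_1 \to u_1$ in the limit, for each $j \neq 1$ the required event is $\{\theta_j \leq u_1\} \cup \{\theta_j > u_j\}$; by independence of the thresholds, these events multiply. A short case distinction gives that this event has probability $1$ when $u_j \leq u_1$, and probability $1 - \int_{u_1}^{u_j} \psi(u)\,du$ when $u_j > u_1$. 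Cutting in coordinate $2$ is handled by a symmetric argument, with the caveat that $\theta_1$ never matters (since $u_1 \leq u_2$ forces the event to hold), and each $\theta_\ell$ with $u_\ell > u_2$ contributes a factor $1 - \int_{u_2}^{u_\ell} \psi(u)\,du$.

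With this machinery the three cases fall out directly. In case 1, where $u_\ell \leq u_1$ for every $\ell \geq 3$, the only nontrivial factor for cutting in coordinate $1$ comes from $j=2$, yielding $(1 - \int_{u_1}^{u_2} \psi(u)\,du)\psi(u_1)$, and coordinate $2$ contributes $\psi(u_2)$. In case 2, one additional factor from $j = \ell$ with $u_1 < u_\ell \leq u_2$ appears in the coordinate-$1$ term, while the coordinate-$2$ term is upper-bounded by $\psi(u_2)$ (dropping the tightening factor $1-\int_{u_\ell}^{u_2}\psi(u)\,du$, which is $\leq 1$). In case 3, where $u_\ell > u_2$, both coordinates pick up a factor involving $u_\ell$, giving exactly the claimed expression. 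If several spectator coordinates exceed $u_1$ or $u_2$, one retains only the single factor corresponding to one such $\ell$; this is a legitimate upper bound because every omitted factor lies in $[0,1]$.

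The main conceptual subtlety, though not a real obstacle, is the careful bookkeeping of the limit $\epsilon \to 0$: one must check that the infinitesimal intervals $(u_1, u_1+\epsilon]$ and $(u_2-\epsilon, u_2]$ are disjoint from the ``capture'' intervals of every $\theta_j$ up to $O(\epsilon^2)$ terms, so that the cut-in-coordinate-$1$ and cut-in-coordinate-$2$ events contribute additively. After that, the rest of the argument is simple algebra based on independence of the thresholds and the fact that dropping any factor produces a valid upper bound.
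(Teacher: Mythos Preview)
Your argument is correct and matches the paper's approach: both analyze the cut in coordinate $1$ and in coordinate $2$ separately and use independence of the thresholds to multiply the probabilities that terminals $2$ and $\ell$ fail to capture the edge before the relevant cut. (One small remark: in your case~2 there is no ``tightening factor'' $1-\int_{u_\ell}^{u_2}\psi$ to drop for the coordinate-$2$ term, since $u_\ell \le u_2$ already makes the non-capture event certain; but the bound $\psi(u_2)$ you state is valid regardless.)
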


\begin{proof}
The first case follows from Lemma~\ref{lem:desc-cut-density}. In the second case, we have $u_1 < u_\ell \leq u_2$ and hence the edge can be cut only if it is not captured by terminal $2$ or $\ell$ before terminal $1$ is considered. The edge is captured by terminal $2$ exactly when the threshold $\theta_2$ is between $u_1$ and $u_2$, which happens with probability $\int_{u_1}^{u_2} \psi(u) du$. Independently, the edge is captured by terminal $\ell$ when $\theta_\ell$ is between $u_1$ and $u_\ell$, which happens with probability $\int_{u_1}^{u_\ell} \psi(u) du$. Therefore, the probability of cutting in coordinate $u_1$ is $(1 - \int_{u_1}^{u_2} \psi(u) du)(1 - \int_{u_1}^{u_\ell} \psi(u) du) \psi(u_1)$. The probability of cutting in coordinate $u_2$ remains bounded by $\psi(u_2)$.

In the third case, we have $u_1 \leq u_2 < u_\ell$. The probability of cutting in coordinate $u_1$ remains the same. The probability of cutting in coordinate $u_2$ is now multiplied by the probability that the edge is not captured by terminal $\ell$ before terminal $2$, which is $(1 - \int_{u_2}^{u_\ell} \psi(u) du)$.
\end{proof}

Given these lemmas, we formulate the expressions for the total cut density under Algorithm~\ref{alg:1.3}.

\begin{corollary}
\label{cor:1.3-total-density}
Let $a(u_1,u_2) = (1-u_1-u_2) / b$.
The cut density under Algorithm~\ref{alg:1.3} for an edge of type $(1,2)$ located at $\bu = (u_1,u_2,\ldots)$ is $d_{12}(\bu)$ where
\begin{compactitem}
\item {\bf Case I.} If $u_1 \leq u_2 \leq b$ and $u_\ell \leq b$ for all $\ell \geq 3$, 
\begin{align*}
d_{12}(\bu)  \leq &~~  p_1 \left(2 - u_1 - u_2 \right) + p_2 \left(\frac12 \phi(u_1) + \phi(u_2) \right) \\
 & +  p_3 \left(2 - \frac{1}{b} (u_2-u_1) \right) \frac{1}{b} \\
 & +  p_4 \left( \frac{2(1-e^{-a(u_1,u_2)})}{a(u_1,u_2) b}
 - \frac{(u_1+u_2)(1 - (1+a(u_1,u_2)) e^{-a(u_1,u_2)})}{(a(u_1,u_2) b)^2} \right).
\end{align*}
\item {\bf Case II.} If $u_1 \leq u_2 \leq b$ and $u_\ell > b$ for some $\ell \geq 3$, 
\begin{align*}
d_{12}(\bu) \leq &~~ p_1 \left(2 - u_1 - u_2 \right) +  p_2 \left(\frac13 \phi(u_1) + \frac12 \phi(u_2) \right) \\
 & +  p_3 \left( \left(1 - \frac{1}{b}(u_2-u_1) \right) u_1 + u_2 \right) \frac{1}{b^2} \\
 & +  p_4 \left( 1 - \frac{1}{6b} (u_1+u_2) \right) \frac{1}{b}.
\end{align*}
\item {\bf Case III.} If $u_1 \leq b < u_2$ and $u_\ell \leq b$ for all $\ell \geq 3$,
\begin{align*}
d_{12}(\bu)  \leq &~~  p_1 \left(2 - u_1 - u_2 \right) +  p_2 \left(\frac12 \phi(u_1) + \phi(u_2) \right) \\
 & +  p_3 \frac{u_1}{b^2}  +  p_4 \frac{a(u_1,u_2) + e^{-a(u_1,u_2)} - 1}{(a(u_1,u_2))^2 b}.
\end{align*}
\item {\bf Case IV.} If $u_1 \leq b < u_2$ and $u_\ell > b$ for some $\ell \geq 3$,
\begin{align*}
d_{12}(\bu)  \leq &~~ p_1 \left(2 - u_1 - u_2 \right) +  p_2 \left(\frac13 \phi(u_1) + \phi(u_2) \right) \\
 & +  p_3 \frac{u_1^2}{b^3} +  p_4 \frac{1}{3b}.
\end{align*}
\item {\bf Case V.} If $b < u_1 \leq u_2$,
\begin{align*}
d_{12}(\bu) \leq &~~ p_1 \left(2 - u_1 - u_2 \right) +  p_2 \left(\frac12 \phi(u_1) + \phi(u_2) \right).
\end{align*}
\end{compactitem}
For $a(u_1,u_2)=0$, the expressions should be interpreted as limits when $a(u_1,u_2) \rightarrow 0$.
\end{corollary}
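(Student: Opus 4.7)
The plan is to derive Corollary~\ref{cor:1.3-total-density} as an immediate consequence of Lemma~\ref{lem:density} together with Lemmas~\ref{lem:single-cut}, \ref{lem:descending-cut}, and \ref{lem:Karger-cut}, combined by linearity of expectation. Since Algorithm~\ref{alg:1.3} executes scheme $i$ with probability $p_i$, the overall cut density at $\bu$ decomposes as
$$ d_{12}(\bu) \;=\; p_1 d^{\mathrm{EC}}_{12}(\bu) + p_2 d^{\mathrm{ST}}_{12}(\bu) + p_3 d^{\mathrm{DT}}_{12}(\bu) + p_4 d^{\mathrm{IT}}_{12}(\bu), $$
so the task reduces to upper-bounding each summand under the distributional choices specified by the algorithm ($\phi$ generic for the Single Threshold scheme, and $\psi$, $\xi$ each equal to $\tfrac{1}{b} \mathbf{1}_{[0,b]}$ for the Descending and Independent Thresholds schemes, respectively).

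I would handle the five cases uniformly. The Exponential Clocks contribution is always $p_1(2 - u_1 - u_2)$ by Lemma~\ref{lem:density}, independently of the remaining coordinates. For the Single Threshold scheme, I match the configuration of the ``extra'' coordinates to Lemma~\ref{lem:single-cut}: in Cases I, III, and V all $u_\ell$ ($\ell \geq 3$) satisfy $u_\ell \le b < u_2$, so we are in either the first or second case of the lemma, and nonnegativity of $\phi$ shows that $\tfrac12 \phi(u_1) + \phi(u_2)$ is the pointwise larger of the two bounds; in Cases II and IV some $u_\ell > b \geq u_1$, so depending on whether $u_\ell \leq u_2$ or $u_\ell > u_2$ the second or third case of the lemma applies, and taking the larger of the two yields exactly the expressions $\tfrac13 \phi(u_1) + \tfrac12 \phi(u_2)$ (Case II) and $\tfrac13 \phi(u_1) + \phi(u_2)$ (Case IV) stated in the corollary.

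For the Descending Thresholds contribution with $\psi(u) = \tfrac{1}{b} \mathbf{1}_{[0,b]}(u)$, I substitute directly into Lemma~\ref{lem:descending-cut}. The integrals $\int \psi\,du$ collapse to the lengths of the relevant coordinate intervals truncated to $[0,b]$, after which elementary algebra yields the closed-form expressions in the corollary. For instance, in Case II one has $\int_{u_1}^{u_\ell} \psi\,du = (b-u_1)/b$ and $\int_{u_2}^{u_\ell} \psi\,du = (b-u_2)/b$ whenever $u_\ell > b$, and substitution produces $\bigl((1 - (u_2-u_1)/b)\,u_1 + u_2\bigr)/b^2$; in Cases III, IV, and V the vanishing of $\psi$ above $b$ kills the $\psi(u_2)$ term and simplifies the remaining integrals to $(b-u_1)/b$. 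The Independent Thresholds contribution is then read off directly from Lemma~\ref{lem:Karger-cut} by matching the five corollary cases to the lemma's five cases, with $a = (1 - u_1 - u_2)/b$.

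I do not expect any conceptual obstacle; the work is a structured case analysis. The only subtlety is that the corollary's expressions do not case-split on the precise location of the extra coordinates $u_\ell$ ($\ell \geq 3$), beyond whether or not some $u_\ell$ exceeds $b$. For Cases I, III, and V this forces me to verify that the bound stated is the pointwise largest among the bounds obtained from Lemma~\ref{lem:single-cut} and Lemma~\ref{lem:descending-cut} over all admissible configurations of the remaining coordinates; nonnegativity of $\phi$ and $\psi$, together with the fact that each factor $(1 - \int \psi)$ is at most $1$, makes each such verification a routine monotonicity check.
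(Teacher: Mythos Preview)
Your proposal is correct and matches the paper's approach: the corollary is stated without proof precisely because it is obtained by combining Lemmas~\ref{lem:density}, \ref{lem:Karger-cut}, \ref{lem:single-cut}, and \ref{lem:descending-cut} via linearity, exactly as you outline. One small slip: your sentence ``in Cases I, III, and V all $u_\ell$ ($\ell\geq 3$) satisfy $u_\ell\le b<u_2$'' is false in Case~I (there $u_2\le b$, and some $u_\ell$ may well exceed $u_2$), so the third case of Lemma~\ref{lem:single-cut} can also arise; but your intended argument---that $\tfrac12\phi(u_1)+\phi(u_2)$ dominates all three bounds of that lemma by nonnegativity of $\phi$---still goes through, and the rest of the case analysis is fine.
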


We remark that Case IV and Case V are applicable only when $b < 1/2$ (otherwise we cannot have two variables above $b$). These cases were necessary for us to explore different choices of $b$, but eventually we identified $b = 6/11$ as the optimal choice. (Curiously, \cite{KKSTY04} make the same choice, even though they use only two out of the four rounding schemes present here. We do not quite understand this coincidence.) Hence, Case IV and Case V do not arise in our final verification.

\medskip
{\bf Formulation as an LP.}
Given the analysis of Algorithm~\ref{alg:1.3} above, it remains to choose the parameters $b$, $p_1$, $p_2$, $p_3$, $p_4$ and the probability distribution $\phi(u)$ so that $d_{12}(\bu)$ can be upper-bounded by a constant as small as possible. It is easy to see for a fixed $b$, this can be formulated as a linear program (after discretization). The only seemingly non-linear part of the problem is the product $p_2 \phi(u)$, but this can be easily folded into a single variable $\tilde{\phi}(u) = p_2 \phi(u)$. Finally, we have the normalization constraint $p_1 + p_2 + p_3 + p_4 = p_1 + \int_0^1 \tilde{\phi}(u) du + p_3 + p_4 = 1$. We minimize an upper-bound on $d_{12}(\bu)$ over all values $0 \leq u_1 \leq u_2 \leq 1$, under a suitable discretization. 

We obtained a solution which involves a discretized description of a probability distribution $\phi$. In order to obtain a concise description,  we approximated this probability distribution by a piecewise-polynomial density function which we describe here. Our solution is as follows.
\begin{itemize}
\item $b = 6/11$
\item $p_1 = 0.31052$
\item $p_2 = 0.305782$
\item $p_3 = 0.015338$
\item $p_4 = 0.36836$
\item $\tilde{\phi}(u) = p_2 \phi(u) \\
 = 0.14957 u - 0.0478 u^2 + 0.45 u^3 \textrm{ for } 0 \leq u \leq 0.23 \\
 = -0.00484 + 0.1995 u - 0.1067 u^2 + 0.158 u^3 \textrm{ for } 0.23 < u \leq 6/11\\
 = 0.47639 + 0.21685 u - 0.02388 u^2 - 0.021 u^3 \textrm{ for } 6/11 < u \leq 0.61\\
 = 0.47368 + 0.2816 u - 0.18365 u^2 + 0.079 u^3 \textrm{ for } 0.61 < u \leq 0.77\\
 = 0.32195 + 0.75 u - 0.6476 u^2 + 0.2239 u^3 \textrm{ for } 0.77 < u \leq 1.$
\end{itemize}

\

\noindent
Computational verification confirms that with these parameters, the cut density is upper-bounded in all cases by $1.296445$, for $0 \leq u_1 \leq u_2 \leq 1$ such that $u_1 + u_2 \leq 1$ and $u_1, u_2$ are multiples of $\delta = 1/2^{16}$. Finally, we use the following (standard) argument to bound the error arising from the discretization.

\begin{lemma}
\label{lem:disc-error}
For any function $f:[x_0,x_0+\delta] \times [y_0,y_0+\delta] \rightarrow \RR$ such that $\secdiff{f}{x}, \secdiff{f}{y} \geq -d$,
\begin{align*}
& \max_{x_0 \leq x \leq x_0+\delta, y_0 \leq y \leq y_0+\delta} f(x,y) \\
\leq &  \max \{f(x_0,y_0), f(x_0+\delta,y_0), f(x_0,y_0+\delta), f(x_0+\delta,y_0+\delta)\}
 + \frac14 d \delta^2.\end{align*}
\end{lemma}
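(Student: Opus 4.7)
The plan is to reduce to the one-dimensional case by introducing an auxiliary function that absorbs the bound on the second derivatives and becomes separately convex in each variable. Specifically, I would define
\[
g(x,y) = f(x,y) - \tfrac{d}{2}(x-x_0)(x_0+\delta-x) - \tfrac{d}{2}(y-y_0)(y_0+\delta-y).
\]
The quadratic correction in $x$ has $\secdiff{}{x} = -d$, so $\secdiff{g}{x} = \secdiff{f}{x} + d \geq 0$; symmetrically, $\secdiff{g}{y} \geq 0$. Thus $g$ is convex in $x$ for every fixed $y$, and convex in $y$ for every fixed $x$.

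The next step is the standard observation that a function which is separately convex in each variable attains its maximum on an axis-aligned rectangle at one of the four corners. Applying convexity in $x$ first, $\max_{x \in [x_0, x_0+\delta]} g(x,y) = \max\{g(x_0,y), g(x_0+\delta, y)\}$ for each fixed $y$; applying convexity in $y$ to each of the two resulting functions of $y$ alone pushes the maximum to a corner. Crucially, both correction terms vanish at every corner (one factor of each product is zero), so $g$ agrees with $f$ at the four corners, and hence
\[
\max_{(x,y) \in R} g(x,y) = \max\{f(x_0,y_0), f(x_0+\delta,y_0), f(x_0,y_0+\delta), f(x_0+\delta,y_0+\delta)\}.
\]

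Finally I would rewrite $f = g + \tfrac{d}{2}(x-x_0)(x_0+\delta-x) + \tfrac{d}{2}(y-y_0)(y_0+\delta-y)$ and bound each quadratic over the interval by its maximum at the midpoint, namely $(\delta/2)^2 = \delta^2/4$. Taking the pointwise sum of these two independent upper bounds gives $f(x,y) \leq g(x,y) + \tfrac{d}{2}\cdot\tfrac{\delta^2}{4} + \tfrac{d}{2}\cdot\tfrac{\delta^2}{4} = g(x,y) + \tfrac14 d \delta^2$. Maximizing the right-hand side over the rectangle and using the corner identity for $\max g$ yields the claimed inequality.

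The only place where care is needed is the justification that separate convexity (not joint convexity) suffices to locate the maximum at a corner; this follows from the iterated-maximization argument above and does not require any regularity beyond what is implicit in having well-defined second derivatives. There are no real computational obstacles — once the auxiliary function is chosen, the rest is bookkeeping — so the main ``creative'' step is the choice of the quadratic correction, which is dictated by wanting second derivative exactly $+d$ while vanishing on the rectangle's corners.
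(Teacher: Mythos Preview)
Your proof is correct and follows essentially the same approach as the paper: both add a quadratic correction with second derivative $+d$ in each variable to obtain a function that is separately convex, then push the maximum to the corners. Your correction $-\tfrac{d}{2}(x-x_0)(x_0+\delta-x)$ differs from the paper's $\tfrac{d}{2}(x-x_0-\tfrac{\delta}{2})^2$ only by the additive constant $\tfrac{d\delta^2}{8}$, so the two auxiliary functions are identical up to a shift; the bookkeeping is correspondingly slightly rearranged (you get $g=f$ at the corners and $f\le g+\tfrac14 d\delta^2$ inside, whereas the paper gets $f\le g$ everywhere and $g\le f+\tfrac14 d\delta^2$ at the corners), but the argument is the same.
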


\begin{proof}
Suppose the maximum of $f$ is attained at $(x^*,y^*)$. Consider the function $g(x,y) = f(x,y) + \frac12 d (x-x_0- \frac12 \delta)^2 + \frac12 d (y-y_0- \frac12 \delta)^2$.  By assumption, $\secdiff{g}{x} = \secdiff{f}{x} + d \geq 0$ and $\secdiff{g}{y} = \secdiff{f}{y} + d \geq 0$, i.e.~$g$ is convex along both axis-parallel directions. By convexity, we have
\begin{align*} g(x^*,y^*) \leq & \frac{x_0+\delta-x^*}{\delta} g(x_0,y^*) + \frac{x^*-x_0}{\delta} g(x_0+\delta,y^*) \\ \leq & \max \{ g(x_0, y^*), g(x_0+\delta,y^*) \}.\end{align*}
Repeating the same argument in the $y$-coordinate, we obtain
\begin{align*} 
 g(x^*,y^*) \leq & \max \{ g(x_0,y_0), g(x_0+\delta,y_0), 
g(x_0,y_0+\delta), g(x_0+\delta,y_0+\delta) \}.
\end{align*}
By the definition of $g$, we have $f(x,y) \leq g(x,y) \leq f(x,y) + \frac14 d \delta^2$ for $(x,y) \in [x_0,x_0+\delta] \times [y_0,y_0+\delta]$.
This implies that
\begin{align*} f(x^*,y^*) \leq & \max \{ f(x_0,y_0), f(x_0+\delta,y_0), f(x_0,y_0+\delta), f(x_0+\delta,y_0+\delta) \}  + \frac14 d \delta^2.\end{align*}
\end{proof}

We apply this lemma to the function $d_{12}(\bu)$ from Corollary~\ref{cor:1.3-total-density}. It can be verified that for all $\bu \in [0,1]^2$, $\secdiff{d_{12}}{u_1}, \secdiff{d_{12}}{u_2} \geq -16$. (The second derivatives are mostly easy to evaluate, except for the expressions for the Independent Thresholds Rounding Scheme --- appearing with a multiplier of $p_4$; we have verified these by {\em Mathematica}.) By Lemma~\ref{lem:disc-error}, the discretization error can be bounded by $\frac14 d \delta^2 = 4 \cdot 2^{-32} = 2^{-30} < 10^{-9}$. This proves the following theorem.

\begin{theorem}
\label{thm:1.2965}
Algorithm~\ref{alg:independent} with parameters as above provides a $1.2965$-approximation for the Multiway Cut problem.
\end{theorem}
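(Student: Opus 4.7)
The plan is to reduce the proof to a finite (if large) computation plus a controlled error analysis, exactly in the style outlined in the paragraph above the theorem. My starting point is Corollary~\ref{cor:1.3-total-density}, which already gives closed-form upper bounds on $d_{12}(\bu)$ in five cases according to where $u_1,u_2$ and any $u_\ell$ ($\ell\geq 3$) sit relative to $b$. With $b=6/11>1/2$ and the constraint $u_1+u_2\leq 1$, it is impossible for both $u_1,u_2$ to exceed $b$, which eliminates Case V; similarly Case IV requires $u_2>b$, $u_\ell>b$ and $u_1+u_2\leq 1-u_\ell$, which is ruled out by $2b>1$. So only Cases I, II, III remain, and into each I would substitute the specific numerical values of $p_1,p_2,p_3,p_4$, $b$, and the piecewise cubic $\tilde\phi(u)=p_2\phi(u)$ prescribed in the theorem. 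This yields three fully explicit functions of $(u_1,u_2)$ whose maximum over the relevant subregion of $\{0\le u_1\le u_2,\ u_1+u_2\le 1\}$ is what needs to be bounded by $1.2965$.

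Next I would invoke Lemma~\ref{lem:disc-error} and reduce the maximization to a grid check. Concretely, I would pick a discretization step $\delta=2^{-16}$, evaluate the three case-specific expressions at every grid point $(u_1,u_2)$ with coordinates multiples of $\delta$ (restricted to the appropriate case), and report the maximum — this is the step that was done by CPLEX/Mathematica and gives the numerical value $1.296445$. The rigorous bridge from this grid maximum to a continuous maximum is Lemma~\ref{lem:disc-error}, which costs an additive $\tfrac14 d\,\delta^2$ where $d$ is a uniform lower bound on $-\secdiff{d_{12}}{u_j}$ for $j\in\{1,2\}$.

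The main obstacle, and the only step I expect to require real work rather than bookkeeping, is verifying that $d\le 16$ in every case. The Exponential Clocks contribution $p_1(2-u_1-u_2)$ is linear, the Descending Thresholds pieces are low-degree polynomials in $u_1,u_2$ with manifestly small coefficients (given $b=6/11$ and the small value of $p_3$), and the Single Threshold piece involves $\tilde\phi$, whose explicit cubic form lets the second derivatives be read off segment-by-segment. The genuinely unpleasant terms are the Independent Thresholds expressions from Lemma~\ref{lem:Karger-cut} in Cases I and III: they depend on $a(u_1,u_2)=(1-u_1-u_2)/b$ through $e^{-a}$ and involve products like $(u_1+u_2)(1-(1+a)e^{-a})/(a^2b^2)$, whose $\secdiff{}{u_j}$ must be bounded by hand. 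I would evaluate these in closed form using Mathematica (as the paper does), write each second derivative in a manifestly smooth form by absorbing the singular $1/a^n$ factors via the Taylor expansion of $e^{-a}$ about $a=0$, and then bound the resulting smooth expressions on the compact domain, concluding that $\secdiff{d_{12}}{u_j}\ge -16$ uniformly.

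Combining these pieces: the continuous maximum is at most the grid maximum plus $\tfrac14\cdot 16\cdot 2^{-32}=2^{-30}<10^{-9}$, so it is at most $1.296445+10^{-9}<1.2965$. Since the cut density for every edge type and every location in the simplex is bounded by this constant, the standard reduction recalled in Section 2 (cut density at most $\alpha$ implies $\alpha$-approximation) yields the claimed $1.2965$-approximation, establishing Theorem~\ref{thm:1.2965}.
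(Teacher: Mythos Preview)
Your proposal is correct and follows essentially the same approach as the paper: reduce to Cases I--III via $b=6/11>1/2$, evaluate the explicit cut-density bounds from Corollary~\ref{cor:1.3-total-density} on the $\delta=2^{-16}$ grid to get $1.296445$, verify $\secdiff{d_{12}}{u_j}\ge -16$ (with Mathematica for the Independent Thresholds terms), and apply Lemma~\ref{lem:disc-error} to bound the discretization error by $2^{-30}$. If anything, you spell out the second-derivative verification in more detail than the paper does.
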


\paragraph{Acknowledgment}
We would like to thank T. S. Jayram for bringing the references to exchangeble sequences (\cite{dF37,D77,DF80}) to our attention,
and David Aldous for a helpful discussion including pointing out the papers \cite{Kingman78} and \cite{TW98}.

\bibliography{multiway-cut-bib}
\bibliographystyle{plainnat}

\appendix
\section{Pairwise realizable distributions}
\label{sec:pairwise}

Here, we address a question that came up in our search for the best distribution of thresholds:

\

{\em Given a joint distribution $\rho$ of two random variables $(X,Y)$ over a domain $\cD$, is it possible to design an arbitrarily large number of random variables $X_1,X_2,\ldots,X_k$ such that for every pair $i \neq j$, the distribution of $(X_i, X_j)$ is the same distribution $\rho$?}

\begin{definition}
A probability distribution $\rho$ over $\cD \times \cD$ is {\em pairwise realizable}, if for every integer $k$, there exist $k$ random variables $X_1,X_2,\ldots,X_k$ such that for each pair $i \neq j$, the joint distribution of $(X_i,X_j)$ is $\rho$.
\end{definition}

In particular, we require that the distribution of $(X_i,X_j)$ is the same as the distribution of $(X_j,X_i)$, so $\rho$ should be symmetric. But even for symmetric distributions, this is not always possible. For example, we cannot design 3 random variables $X_1,X_2,X_3$ such that for each pair, we have $(X_i=0,X_j=1)$ or $(X_i=1,X_j=0)$ with probability $1/2$.
Here, we present a necessary and sufficient condition for a distribution to be pairwise realizable. To avoid technical complications, we restrict ourselves to discrete domains $\cD$.

\begin{theorem}
\label{thm:pairwise-realizable}
A probability distribution $\rho$ over a discrete domain $\cD \times \cD$ is pairwise realizable if and only if $\rho$ is a convex combination of symmetric product distributions:
$$ \rho(a,b) = \sum_{s=1}^{r} \alpha_s p_s(a) p_s(b), $$
where $p_s: \cD \rightarrow [0,1]$, $\sum_{a \in \cD} p_s(a) = 1$, $\alpha_s \in [0,1]$, and $\sum_{s=1}^{r} \alpha_s = 1$.
\end{theorem}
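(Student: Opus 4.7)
My plan is to handle the two directions separately.

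For sufficiency, given $\rho(a,b) = \sum_{s=1}^{r} \alpha_s p_s(a) p_s(b)$, I realize $\rho$ for any $k$ by first sampling an index $S$ with $\Pr[S=s] = \alpha_s$ and then drawing $X_1, \ldots, X_k$ i.i.d.\ from $p_S$. Conditioning on $S$, any pair $(X_i, X_j)$ with $i \neq j$ is distributed as $p_S \otimes p_S$, so marginalizing over $S$ yields
\[ \Pr[X_i=a,\, X_j=b] \;=\; \sum_s \alpha_s p_s(a) p_s(b) \;=\; \rho(a,b). \]

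For necessity, the strategy is a de Finetti-style argument based on the random empirical distribution induced by a length-$k$ realization. Given $X_1^{(k)}, \ldots, X_k^{(k)}$ pairwise-distributed as $\rho$, set $\hat p_k(a) = \tfrac{1}{k} \bigl|\{i : X_i^{(k)} = a\}\bigr|$, which takes values in the probability simplex $\Delta(\cD)$ on $\cD$. Splitting $\sum_{i,j} \Pr[X_i^{(k)}=a, X_j^{(k)}=b]$ into the $k$ diagonal and $k(k-1)$ off-diagonal terms gives
\[ \E{\hat p_k(a)\, \hat p_k(b)} \;=\; \tfrac{1}{k}\, p_1(a)\, \delta_{ab} \;+\; \tfrac{k-1}{k}\, \rho(a,b), \]
where $p_1$ is the marginal of $\rho$ and $\delta_{ab}$ is the Kronecker delta. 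Let $\nu_k$ denote the law of $\hat p_k$, a probability measure on $\Delta(\cD)$. When $\cD$ is finite, $\Delta(\cD)$ is compact, so along a subsequence $\nu_{k_n}$ converges weakly to some $\nu$; since $(p \mapsto p(a) p(b))$ is bounded and continuous on $\Delta(\cD)$, passing to the limit in the above identity yields the integral representation $\rho(a,b) = \int_{\Delta(\cD)} p(a) p(b)\, d\nu(p)$.

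To convert this integral into the finite sum demanded by the statement, I apply Carath\'eodory's theorem in the finite-dimensional vector space of symmetric $|\cD| \times |\cD|$ matrices: any element of the convex hull of the rank-one matrices $\{p p^T : p \in \Delta(\cD)\}$ is already a convex combination of at most $O(|\cD|^2)$ of them, which produces the desired representation $\rho(a,b) = \sum_{s=1}^{r} \alpha_s p_s(a) p_s(b)$. The main obstacle I anticipate is the weak-compactness step when $\cD$ is countably infinite rather than finite: in that regime mass can escape under weak limits, so I would first establish the representation for every restriction of $\rho$ to a finite sub-domain of $\cD$ and then paste these local representations into a global one via a compactness argument on the space of representing measures. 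The discreteness hypothesis on $\cD$ is precisely what makes this last maneuver tractable.
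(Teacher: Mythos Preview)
Your proof is correct (at least for finite $\cD$, which is also what the paper's argument really handles), but it follows a genuinely different route from the paper's.

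For the necessity direction, the paper argues by contradiction via convex separation. Let $\cS$ be the convex hull of symmetric product distributions in $\RR^{\cD\times\cD}$. If $\tilde\rho\notin\cS$, a separating functional $y:\cD\times\cD\to\RR$ satisfies $\sum_{a,b} y(a,b)\tilde\rho(a,b)=\lambda_1<\lambda_2\le\sum_{a,b} y(a,b)p(a)p(b)$ for every $p\in\Delta(\cD)$. Assuming $X_1,\ldots,X_k$ realize $\tilde\rho$ pairwise, the paper evaluates $\E{\sum_{i\neq j} y(X_i,X_j)}$ two ways: grouping by the empirical type gives a lower bound $k^2\lambda_2 - k\max_a y(a,a)$, while the pairwise marginals give exactly $k(k-1)\lambda_1$. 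For $k$ large enough these are incompatible.

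Your argument instead \emph{constructs} a representing measure directly: you push forward to the law $\nu_k$ of the empirical measure $\hat p_k$, use the identity $\E{\hat p_k(a)\hat p_k(b)}=\tfrac{1}{k}\delta_{ab}p_1(a)+\tfrac{k-1}{k}\rho(a,b)$, and pass to a weak subsequential limit on the compact simplex $\Delta(\cD)$ to obtain $\rho(a,b)=\int p(a)p(b)\,d\nu(p)$; Carath\'eodory then reduces the integral to a finite convex combination. This is essentially the classical de~Finetti proof, which the paper explicitly cites as an alternative derivation. What your approach buys is an explicit mixing measure (a limit of empirical-type laws) rather than a pure existence statement; what the paper's approach buys is that it stays entirely within finite-dimensional linear algebra, avoiding any appeal to weak compactness of measures. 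Both arguments tacitly need $\cD$ finite at the point where compactness is invoked (your $\Delta(\cD)$ compact; the paper's $\cS$ compact and $\max_a y(a,a)$ finite), so your caveat about countably infinite $\cD$ applies equally to the paper's proof.
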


As pointed out to us by David Aldous, this statement (although apparently not stated explicitly) can be also derived from the theory of {\em exchangeable sequences} \citep{dF37,D77}. An infinite sequence of random variables $X_1, X_2, X_3,\ldots$ is exchangeable if for every $n$ and a permutation $\pi$ on $[n]$, $(X_1, X_2,$ $\ldots,X_n)$ and $(X_{\pi(1)}, X_{\pi(2)},$ $\ldots,$ $X_{\pi(n)})$ have the same distribution. De Finetti's Theorem states that the distribution of every exchangeable sequence is a convex combination of distributions of sequences where $X_1,X_2,\ldots$ are identical and independent \citep{dF37}.
Our statement can also be viewed as a result about infinite sequences of random variables: If $X_1,X_2,\ldots$ is an infinite sequence such that every {\em pair} has the same joint distribution, then this distribution is a convex combination of distributions of two identical and independent random variables. 
This can be derived from known results in the theory of exchangeability (using e.g., \citep{Kingman78,DF80,TW98}); here we give a direct proof using Farkas' lemma (or equivalently the convex separation theorem).

\

\begin{proof}[Theorem~\ref{thm:pairwise-realizable}]
The easy direction first:
If $\rho(a,b) = \sum_{s=1}^{r} \alpha_s p_s(a) p_s(b)$ as above, we can generate arbitrarily many random variables as follows: With probability $\alpha_s$, we generate $k$ independent random variables $X_1,X_2,\ldots,X_k$, each with probability distribution $p_s$. It is easy to see that for each $i \neq j$, the joint distribution of $(X_i,X_j)$ is $\rho$.

The reverse direction relies on the convex separation theorem. 
Define $\cS$ to be the convex hull of symmetric product distributions on $\cD \times \cD$, which is exactly the set of all distributions $\rho$ satisfying the conclusion of the theorem:
\begin{eqnarray*}
\cS & = & \Big\{ \rho: \cD \times \cD \rightarrow [0,1] \ \Big| \ \rho(a,b) = \sum_{s=1}^{r} \alpha_s p_s(a) p_s(b), 
  \sum_{a \in \cD} p_s(a) = 1, p_s(a) \geq 0, \sum_{s=1}^{r} \alpha_s = 1, \alpha_s \geq 0 \Big\}.
\end{eqnarray*}
We note that $\cS$ is a convex compact set in $\RR^{\cD \times \cD}$.
Hence, if $\tilde{\rho}: \cD \times \cD \rightarrow [0,1]$ is a distribution outside of $\cS$, then by the convex separation theorem there is $y: \cD \times \cD \rightarrow \RR$ and $\lambda_1 < \lambda_2$ such that
\begin{itemize}
\item $\sum_{a,b \in \cD} y(a,b) \tilde{\rho}(a,b) = \lambda_1$,
\item $\forall \rho \in \cS; \sum_{a,b \in \cD} y(a,b) \rho(a,b) \geq \lambda_2$.
\end{itemize}
In particular, for every distribution $p: \cD \rightarrow [0,1], \sum_{a \in \cD} p(a) = 1$, we have
\begin{equation}
\label{eq:lambda_2-bound}
 \sum_{a,b \in \cD} y(a,b) p(a) p(b) \geq \lambda_2.
\end{equation}
Assume for a contradiction that for any $k$, there are random variables $X_1,X_2,\ldots,X_k$ such that $ \forall i \neq j$ the distribution of $(X_i,X_j)$ is $\tilde{\rho}$. In particular, we choose $k > \frac{1}{\lambda_2-\lambda_1} (\max_{a \in \cD} y(a,a) - \lambda_1)$. (The right-hand side is at least $1$ by equation (\ref{eq:lambda_2-bound}).) We consider the following quantity:
\begin{equation}
\label{eq:magic}
 \sum_{a_1,a_2,\ldots,a_k \in \cD} \Pr[X_1=a_1,X_2=a_2,\ldots,X_k=a_k] \sum_{i \neq j} y(a_i,a_j).
\end{equation}
First, fix any choice of $a_1,a_2,\ldots,a_k \in \cD$ and consider the sum $\sum_{i \neq j} y(a_i,a_j)$.
Define $p(a) = \frac{1}{k} |\{ i \in [k]: a_i = a\}|$.
Then we have
\begin{eqnarray*}
\sum_{i \neq j} y(a_i,a_j) & = & \sum_{i,j=1}^{k} y(a_i,a_j) - \sum_{i=1}^{k} y(a_i,a_i) 
  = k^2 \sum_{a,b \in \cD} p(a) p(b) y(a,b) - k \sum_{a \in \cD} p(a) y(a,a).
\end{eqnarray*}
By the properties of $y(a,b)$, we get
\begin{eqnarray*}
 \sum_{i \neq j} y(a_i,a_j) & = & k^2 \sum_{a,b \in \cD} p(a) p(b) y(a,b) - k \sum_{a \in \cD} p(a) y(a,a) \\
 & \geq & k^2 \lambda_2 - k \sum_{a \in \cD} p(a) y(a,a) \\
 & \geq & k^2  \lambda_2 - k \, \max_{a \in \cD} y(a,a) .
\end{eqnarray*}
Now we use the fact that $k > \frac{1}{\lambda_2-\lambda_1} (\max_{a \in \cD} y(a,a) - \lambda_1)$, and therefore
$k \lambda_2 > k \lambda_1 + (\max_{a \in \cD} y(a,a) - \lambda_1) = (k-1) \lambda_1 + \max_{a \in \cD} y(a,a)$.
We obtain
\begin{eqnarray*}
 \sum_{i \neq j} y(a_i,a_j)  & > & k (k-1) \lambda_1.
\end{eqnarray*}
To summarize, expression (\ref{eq:magic}) which is a convex combination of such terms, is lower-bounded strictly by
\begin{equation}
\label{eq:magic-1}
\sum_{a_1,a_2,\ldots,a_k \in \cD} \Pr[X_1=a_1,X_2=a_2,\ldots,X_k=a_k] \sum_{i \neq j} y(a_i,a_j)  
 > k (k-1) \lambda_1.
 \end{equation}
On the other hand, by switching the sums, we have
\begin{eqnarray*}
& & \sum_{a_1,a_2,\ldots,a_k \in \cD} \Pr[X_1=a_1,X_2=a_2,\ldots,X_k=a_k] \sum_{i \neq j} y(a_i,a_j) \\
&= & \sum_{i \neq j} \sum_{a_i,a_j \in \cD} y(a_i,a_j) \hspace{-0.2in} \sum_{(a_\ell \in \cD: \ell \neq i,j)}
   \Pr[X_1=a_1,X_2=a_2,\ldots,X_k=a_k] \\
&= & \sum_{i \neq j} \sum_{a_i,a_j \in \cD} y(a_i,a_j) \Pr[X_i=a_i, X_j=a_j] \\
&= & \sum_{i \neq j} \sum_{a_i,a_j \in \cD} y(a_i,a_j) \tilde{\rho}(a_i,a_j) =  k(k-1) \lambda_1
\end{eqnarray*}
again by the properties of $y(a,b)$. This is a contradiction with (\ref{eq:magic-1}).
\end{proof}

\section{A tight example for $\frac{3+\sqrt{5}}{4}$}
\label{sec:tight-example}

Here we show the following lower bound.

\begin{theorem}
No combination of the Single Threshold Rounding Scheme (with any distribution) and the Exponential Clocks Rounding Scheme can provide an approximation factor better than $(3+\sqrt{5})/4$. 
\end{theorem}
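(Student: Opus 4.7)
The plan is an averaging (LP-duality) argument. By Lemma~\ref{lem:density}, any combination of the Exponential Clocks scheme (with probability $p$) and the Single Threshold scheme (with density $\phi$) yields, for every pair $0 \le u_1 \le u_2 \le 1$, the cut density
\[
D(u_1,u_2) = p(2-u_1-u_2) + (1-p)\bigl(\tfrac12\phi(u_1)+\phi(u_2)\bigr).
\]
It therefore suffices to exhibit a single probability distribution $\rho$ on $\{0 \le u_1 \le u_2 \le 1\}$ such that $\EE{\rho}{D(u_1,u_2)} \ge M := (3+\sqrt{5})/4$ for every admissible pair $(p,\phi)$; the conclusion $\sup_{u_1,u_2} D \ge M$ then follows immediately.

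The construction of $\rho$ is guided by complementary slackness with the algorithm of Theorem~\ref{thm:1.309}, whose tight region is $\{u_1 \le b \le u_2\}$ with $b=\sqrt{5}-2$. I take $\rho = \rho_1 \otimes \rho_2$ (independent marginals), where $\rho_1$ has an atom of mass $q := 1-2Mb = (3-\sqrt{5})/2$ at $0$ together with density $2M$ on $(0,b]$, and $\rho_2$ is uniform on $[b,1]$, i.e., has density $M = 1/(1-b)$ there. The identities $2Mb=(\sqrt{5}-1)/2$ and $M(1-b)=1$ confirm that both marginals are probability measures and that $\rho$ is supported in $\{0 \le u_1 \le b \le u_2 \le 1\}$.

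Two elementary computations then finish the argument. Since $\EE{\rho}{u_1}=Mb^2$ (the atom at $0$ contributes $0$) and $\EE{\rho}{u_2}=M(1-b^2)/2$, one gets
\[
\EE{\rho}{2-u_1-u_2} = 2 - \tfrac{M}{2}(1+b^2) = M,
\]
where the last equality is equivalent to $M(3+b^2)=4$, which holds for our $M$ and $b$. For the single-threshold term, Fubini gives, for any density $\phi$,
\[
\EE{\rho}{\tfrac12\phi(u_1)+\phi(u_2)} = \tfrac{q}{2}\phi(0) + M\!\int_0^b\!\phi\,du + M\!\int_b^1\!\phi\,du = M + \tfrac{q}{2}\phi(0) \ge M.
\]
Taking a convex combination with weights $p$ and $1-p$ yields $\EE{\rho}{D} \ge pM + (1-p)M = M$, as required.

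The only technical point I expect to require care is the atom at $u_1=0$: if $\phi$ is merely an $L^1$ density, the value $\phi(0)$ is not pointwise defined, so the term $\tfrac{q}{2}\phi(0)$ must be interpreted with care. This is standard to handle by smoothing --- replace the atom by a uniform density $q/\epsilon$ on $[0,\epsilon]$ (added to the $2M$ base density) to obtain $\rho_\epsilon$. The smoothed $\rho_\epsilon$ still satisfies $\tfrac12 g_\epsilon(u)+h(u) \ge M$ pointwise on $[0,1]$ while $\EE{\rho_\epsilon}{2-u_1-u_2} = M - O(\epsilon)$, so $\sup D \ge M - O(\epsilon)$ for every $\epsilon>0$ and hence $\sup D \ge M$ in the limit $\epsilon\to 0$. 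Apart from this routine regularization, the whole proof reduces to direct computation driven by the complementary-slackness choice of $\rho$.
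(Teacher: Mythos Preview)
Your proposal is correct and takes a genuinely different route from the paper.

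The paper builds an explicit adversary distribution of \emph{edges in the $k$-simplex}: three families $A_{ij}, B_{ij}, C_{ij}$ of edges (supported on line segments and a corner point), and then verifies case-by-case that the Exponential Clocks scheme and every fixed-threshold partition pay at least $(3+\sqrt{5})/4 - O(1/k)$ against this distribution. In particular, the paper's analysis of the Single Threshold scheme is a case split on the range of $\theta$ (five intervals), with separate accounting for which edge family gets cut in each range.

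You instead work directly with the two-variable density formula $D(u_1,u_2)$ from Lemma~\ref{lem:density} and give a clean LP-duality certificate: a \emph{product} measure $\rho = \rho_1 \otimes \rho_2$ on the rectangle $[0,b]\times[b,1]$, chosen so that the induced weight $\tfrac12 g(u) + h(u)$ on $\phi$ is pointwise $\ge M$ on $[0,1]$ (making the Single Threshold term $\ge M$ for \emph{every} density $\phi$ in one stroke), while the Exponential Clocks term averages to exactly $M$. This replaces the paper's case analysis on $\theta$ by a single pointwise inequality on the dual weight, and the complementary-slackness motivation makes it clear \emph{why} the certificate lives on $\{u_1 \le b \le u_2\}$.

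What the paper's approach buys is explicitness at finite $k$: it produces an actual edge-distribution in $\Delta_k$ and the $O(1/k)$ loss is tracked throughout. Your argument is more elegant but implicitly works in the $k\to\infty$ limit: the identity $d_{12}(\mathbf{u}) = \tfrac12\phi(u_1)+\phi(u_2)$ from Lemma~\ref{lem:density} holds exactly only when all other coordinates satisfy $u_\ell \le u_1$, which for $u_1>0$ requires $k$ large. This is harmless for the stated theorem (and your smoothing already moves the mass off $u_1=0$), but it is worth noting that the paper's construction handles the finite-$k$ bookkeeping that your argument abstracts away.
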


We show this by considering a game between the algorithm and  an adversary. The strategy space of the adversary is to pick an edge in the simplex. The strategy space of the algorithm is to pick a partition of the simplex into $k$ parts of the following type:
\begin{enumerate}
\item Exponential Clocks: a random partition generated by Algorithm~\ref{alg:exp-clock}, or
\item Single Threshold: a partition generated by Algorithm~\ref{alg:single-threshold} for any fixed value of $\theta$
\end{enumerate}
and any probability distribution over the strategies above.
The game is a zero-sum game where if the endpoints of the edge picked by the adversary belong to two different parts of the partition picked by the algorithm, then the algorithm pays to the adversary a cost of $1$ divided by the length of the edge (where recall that the length of the edge is defined to be $\frac12 \times$ the $L_{1}$-distance between its endpoints). If the edge belongs to one part of the partition, there is $0$ payoff to both players.

Clearly, if there is an algorithm in the strategy space that achieves a cut density bounded by $\alpha$, then this implies a strategy for the algorithm player that pays at most $\alpha$ in expectation against any adversary. We present a strategy for the adversary such that the algorithm has to pay at least $\frac{3+\sqrt{5}}{4} - O(\frac{1}{k})$ in expectation. I.e., a cut density better than $\frac{3+\sqrt{5}}{4} - O(\frac{1}{k})$  cannot be achieved by this type of algorithm.

\subsection{Probability distribution over edges}

Here we define a strategy for the adversary. 
For each \emph{ordered} pair of terminals $(i,j)$ ($i, j \in [k]$, $i\ne j$), we have three sets of edges: $A_{ij}$, $B_{ij}$ and $C_{ij}$. The edges in each of the sets $A_{ij}$, $B_{ij}$ and $C_{ij}$ have the property that their endpoints differ only in the coordinates $i$ and $j$. Hence, the edges ${\bf (v,v')}$ in these sets are of the form ${\bf v}=(u_{1}, \cdots, u_{k})$ and ${\bf v'}=(u_{1}, \cdots, u_{i}-\epsilon, \cdots, u_{j}+\epsilon, \cdots u_{k})$. In all the three sets, we shall have $\forall k\ne i,j, ~ u_{k}=(1-u_{i}-u_{j})/(k-2)$. Hence, each edge is defined by the values of $u_{i}$ and $u_{j}$.
\begin{itemize}[noitemsep,topsep=0pt,parsep=0pt,partopsep=0pt]
\item $A_{ij}$ consists of all edges such that $u_{i}=x$ for some $x\in[3b, 1]$ and $u_{j}=0$.
\item $B_{ij}$ consists of all edges such that $u_{i}+2u_{j}=3b$ with $b \le u_{i} \le 3b$ (and hence, $0\le u_{j}\le b$).
\item $C_{ij}$ consists of a single edge defined by $u_{i}=1$ and $u_{j}=0$.
\end{itemize}
This completes the description of the sets $A_{ij}$, $B_{ij}$ and $C_{ij}$ except for setting the values of the parameters $b$ and $\epsilon$. The three sets are depicted in Figure~\ref{fig:tight-example}. We shall set $b= \sqrt{5}-2$ (as in Section~\ref{sec:1.309}) and $\epsilon = (1-2b)/(k-2)$.

\begin{figure}[h]

\centering

\pgfmathsetmacro{\Scale}{3}

\newcommand{\flauntFont}[1]{{\large \bf #1}}

\pgfmathsetmacro{\eps}{0.05}

\pgfmathsetmacro{\PointB}{sqrt(5)-2}
\pgfmathsetmacro{\BFactor}{(1-3*\PointB)*0.5}

\pgfmathsetmacro{\terminalOneThreeBX}{\BFactor}
\pgfmathsetmacro{\terminalOneThreeBY}{\BFactor*2}

\pgfmathsetmacro{\terminalTwoThreeBX}{1 - \terminalOneThreeBX}
\pgfmathsetmacro{\terminalTwoThreeBY}{\terminalOneThreeBY}

\pgfmathsetmacro{\BBX}{0.5}
\pgfmathsetmacro{\BBY}{0.5*(1+\PointB)}

\def\DrawEdges#1#2#3#4#5{
\pgfmathsetmacro{\stepsizeinverse}{100}

\pgfmathsetmacro{\increment}{\eps*#5}

\pgfmathsetmacro{\startpoint}{#1*\stepsizeinverse}
\pgfmathsetmacro{\endpoint}{#3*\stepsizeinverse}
\pgfmathsetmacro{\Slope}{(#4-#2)/(#3-#1)}
\pgfmathsetmacro{\Intersection}{#2 - \Slope* #1}

\foreach \x [evaluate=\x as \pointx using ((\x)/\stepsizeinverse, evaluate=\x as \pointy using ((\Slope*\pointx)+\Intersection) ]in {\startpoint,...,\endpoint}	
{
\draw[-] (\pointx , \pointy) -- (\pointx+\increment , \pointy);
}
}

\begin{tikzpicture}[scale=\Scale]

\draw[dashed,-](0,0) -- (1,0);
\draw[dashed,-](0,0) -- (1/2,1);
\draw[dashed,-](1/2,1) -- (1,0);
\draw[dashed,-](0.25-0.25*\PointB,0) -- (5/8-1/8*\PointB,3/4+1/4*\PointB) node[right]{$(u_1=0, u_2=b)$};
\draw[dashed,-](0.75+0.25*\PointB,0) -- (3/8+1/8*\PointB,3/4+1/4*\PointB) node[left]{$(u_1=b, u_2=0)$};

\draw[] (0.0,0.0) node[left] {$(u_1=1, u_2=0)$};
\draw[] (1.0,0.0) node[right] {$(u_1=0, u_2=1)$};
\draw[] (0.15,0.3) node[left] {$(u_1=3b, u_2=0)$};
\draw[] (0.85,0.3) node[right] {$(u_1=0, u_2=3b)$};
\draw[] (0.5,1.0) node[above] {$(u_1=0, u_2=0)$};
\draw[] (0.1,0.15) node[right] {\flauntFont{$A_{12}$}};
\draw[] (0.9,0.15) node[left] {\flauntFont{$A_{21}$}};
\draw[] (0.25,0.55) node[right] {\flauntFont{$B_{12}$}};
\draw[] (0.75,0.55) node[left] {\flauntFont{$B_{21}$}};
\draw[] (0.05,0) node[below] {\flauntFont{$C_{12}$}};
\draw[] (0.95,0) node[below] {\flauntFont{$C_{21}$}};

\draw[line width=3pt] (0,0) -- (\eps,0);
\draw[line width=3pt] (1,0) -- (1-\eps,0);

\DrawEdges{0}{0}{\terminalOneThreeBX}{\terminalOneThreeBY}{1}
\DrawEdges{\terminalOneThreeBX}{\terminalOneThreeBY}{\BBX}{\BBY}{1}
\DrawEdges{\terminalTwoThreeBX}{\terminalTwoThreeBY}{\BBX}{\BBY}{-1}
\DrawEdges{1}{0}{\terminalTwoThreeBX}{\terminalTwoThreeBY}{-1}

\end{tikzpicture}

\vspace{-0.2in}
\caption{Tight example: edges of type $(1,2)$.}
\label{fig:tight-example}
\end{figure}

We now define the probability distribution over $\bigcup_{(i,j)} (A_{ij} \cup B_{ij} \cup C_{ij})$ that defines the strategy of the adversary. Starting with $C_{ij}$, the single edge is chosen with probability $\frac{2(1-2b)}{(1-b)(k-2)}~\frac{1}{k(k-1)}$. Hence, the total probability mass over the edges in $\bigcup_{(i,j)}C_{ij}$ is $\frac{2(1-2b)}{(1-b)(k-2)}$.

The adversary has a uniform distribution over edges in $\bigcup_{(i,j)}A_{ij}$. Similarly, the adversary has a uniform distribution over edges in $\bigcup_{(i,j)}B_{ij}$. The total probability mass of the edges in $\bigcup_{(i,j)}A_{ij}$ is $\frac{1-3b}{1-b}~(1-\frac{1}{k-2})$, and the total probability mass of the edges in $\bigcup_{(i,j)}B_{ij}$ is $(\frac{2b}{1-b} - \frac{1}{k-2})$. Note that there are $k(k-1)$ ordered pairs $(i,j)$, hence for a particular ordered pair $(i,j)$, the sets $A_{ij}$ and $B_{ij}$ carry a total probability mass of $\frac{(1-3b)}{(1-b)}(1-\frac{1}{k-2}) \frac{1}{k(k-1)} $ and $(\frac{2b}{1-b} - \frac{1}{k-2}) \frac{1}{k(k-1)}$ respectively.

We defer the analysis of the max-min value of the game to the full version of the paper.

It can be verified that the probabilities add up to $$\frac{2(1-2b)}{(1-b)(k-2)} + \frac{1-3b}{1-b}\left(1-\frac{1}{k-2}\right) + \left(\frac{2b}{1-b} - \frac{1}{k-2}\right) = 1.$$

In the following two subsections, we show that any strategy adopted by the algorithm will incur a cost of at least $\frac{3+\sqrt{5}}{4} - O(\frac{1}{k})$.
Note that length of every edge used in the probability distribution by the adversary is $\epsilon$. Therefore, we need to show that the probability of the edge being cut by any strategy of the algorithm is at least $\left( \frac{3+\sqrt{5}}{4} - O(\frac{1}{k}) \right) \epsilon$. 
We remark that by our choice of $b = \sqrt{5}-2$, we have $\frac{3+\sqrt{5}}{4} = \frac{1}{1-b}$.

\subsection{Performance of the Exponential Clocks \\ Rounding Scheme}

For any edge $({\bf v,v'})$ which differ only on coordinates $i$ and $j$, and is of length $\epsilon$, we know from \cite{BNS13} that the probability that a random partition from the Exponential Clocks Rounding Scheme cuts the edge is $\frac{2-u_{i}-u_{j}+\epsilon}{1+\epsilon}~\epsilon \geq (2-u_i-u_j-\epsilon)~\epsilon$. Hence, the probability of an edge being cut in the set $A_{ij}$ (for an ordered pair $(i,j)$) is at least
\begin{align*}\frac{1-3b}{1-b} \left(1-\frac{1}{k-2}\right) \frac{1}{k(k-1)} \cdot \frac{1}{1-3b}~\int_{3b}^{1}(2-x-\epsilon) \mathrm{d}x \cdot \epsilon \\= \left( \frac{3}{2}(1-3b) - O\left(\frac{1}{k} \right) \right)~\frac{\epsilon}{k(k-1)}.\end{align*}
Similarly, the probability of an edge being cut in the set $B_{ij}$ is at least
\begin{align*}\left(\frac{2b}{1-b} - \frac{1}{k-2}\right)\frac{1}{k(k-1)} \cdot \frac{1}{b}~\int_{0}^{b}(2-(3b-x)-\epsilon) \mathrm{d}x \cdot \epsilon \\= \left(\frac{b}{1-b}~\left(4-5b\right) - O\left(\frac{1}{k}\right) \right)~\frac{\epsilon}{k(k-1)}~.\end{align*}
The probability of an edge being cut in $C_{ij}$ is $\frac{2(1-2b)}{(1-b)(k-2)}\frac{\epsilon}{k(k-1)}$, which does not contribute significantly to the total probability; since we are interested in a lower bound, we can ignore the contribution of $C_{ij}$ here.

Adding up over all $i \neq j$, we obtain that the total probability of an edge being cut by the algorithm is at least
 $$\left(\frac{3}{2}(1-3b) + \frac{b}{1-b}~\left(4-5b\right) - O\left(\frac{1}{k}\right) \right)~\epsilon$$
which is equal to $\left(\frac{3+\sqrt{5}}{4} - O(\frac{1}{k}) \right)~\epsilon$ after plugging in $b=\sqrt{5}-2$. 

\subsection{Performance of a partition induced by a single threshold}

We now consider a partition induced by a choosing a single threshold $\theta$ and a random permutation $\sigma$ of the terminals. Since the distribution of the edges over the terminals is symmetric, we can consider the case when $\sigma$ is the identity permutation. 

Define an edge to be \emph{captured} by a terminal if both endpoints of the edge are assigned to this  terminal. Similarly, define an edge to be \emph{cut} by a terminal if one of the endpoints is assigned to the terminal but the other endpoint is not.

Before we delve into the analysis, we would like to make an observation. For any ordered pair $(i,j)$, in both $A_{ij}$ and $B_{ij}$, we know that $\forall k \ne i,j$, $u_{k}=(1-u_{i}-u_{j})/(k-2)$. Since $u_i+u_j \geq 2b$ for all edges in $A_{ij} \cup B_{ij}$, this means that the values of $u_{k}$ ($k\ne i,j$) among all edges in $A_{ij}$ and $B_{ij}$ are upper-bounded by $(1-2b)/(k-2) = \epsilon$. For the edges in $C_{ij}$, we have $u_i + u_j = 1$, and hence all the remaining coordinates are $0$.
This implies the following two observations.

\begin{observation}
\label{obs:ABuntouchedbyk}
For $\theta \in (\epsilon,1]$, for any ordered pair $(i,j)$, the probability that an edge in $A_{ij}$ or $B_{ij}$ is either captured or cut by a terminal $k\ne i,j$ is zero. This holds irrespective of the permutation chosen over the terminals.
\end{observation}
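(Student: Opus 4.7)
The plan is to reduce the claim to a simple observation about the values of the coordinates $u_k$ for $k \neq i,j$ on edges in $A_{ij} \cup B_{ij}$, and then use the definition of the Single Threshold Rounding Scheme.

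First I would verify the uniform upper bound $u_k \leq \epsilon$ for every $k \neq i,j$. By the construction of the adversary's edge sets, both endpoints $\bv$ and $\bv'$ of an edge in $A_{ij}$ or $B_{ij}$ share the same value $u_k = (1-u_i-u_j)/(k-2)$ in every coordinate $k \neq i,j$. For $A_{ij}$ we have $u_i \geq 3b$ and $u_j = 0$, hence $u_i + u_j \geq 3b \geq 2b$. For $B_{ij}$ we have $u_i + 2u_j = 3b$ with $u_j \in [0,b]$, hence $u_i + u_j = 3b - u_j \geq 2b$. In either case $1 - u_i - u_j \leq 1 - 2b$, so
\begin{equation*}
u_k \;=\; \frac{1 - u_i - u_j}{k-2} \;\leq\; \frac{1-2b}{k-2} \;=\; \epsilon.
\end{equation*}

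Next I would invoke the mechanics of Algorithm~\ref{alg:single-threshold}. Under any permutation $\sigma$, a vertex $v$ is assigned to terminal $k$ only if $x_{v,k} \geq \theta$. Since both endpoints satisfy $x_{\bv,k} = x_{\bv',k} = u_k \leq \epsilon$ and the hypothesis is $\theta > \epsilon$, the inequality $x_{\bv,k} \geq \theta$ fails strictly at both endpoints, so neither endpoint can ever be assigned to terminal $k$. Consequently terminal $k$ neither captures the edge (which would require both endpoints to be assigned to $k$) nor cuts it (which would require at least one endpoint to be assigned to $k$). This holds for every permutation $\sigma$ and every $k \neq i,j$, proving the claim.

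There is no real obstacle in this argument; the content of the observation is entirely the book-keeping that the choice $\epsilon = (1-2b)/(k-2)$ was made precisely so that the minimum ``non-$(i,j)$'' mass on the simplex does not exceed $\epsilon$, while the adversary's threshold lies strictly above $\epsilon$. The only mild care needed is the separation between $A_{ij}, B_{ij}$ (for which $u_i + u_j \geq 2b$) and $C_{ij}$ (where $u_i + u_j = 1$, so $u_k = 0$ and the same conclusion holds trivially, though $C_{ij}$ is not covered by the statement).
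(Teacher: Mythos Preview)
Your proof is correct and follows essentially the same reasoning as the paper: the paragraph immediately preceding the observation establishes that $u_k \le (1-2b)/(k-2) = \epsilon$ for every $k \neq i,j$ on edges in $A_{ij} \cup B_{ij}$, from which the claim follows since the threshold condition $x_{v,k} \ge \theta > \epsilon$ can never be met. Your verification that $u_i + u_j \ge 2b$ in both $A_{ij}$ and $B_{ij}$ is exactly the computation the paper relies on.
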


\begin{observation}
\label{obs:Cuntouchedbyk}
For $\theta \in (0,1]$, for any ordered pair $(i,j)$, the probability that an edge in $C_{ij}$ is either captured or cut by a terminal $k\ne i,j$ is zero. This holds irrespective of the permutation chosen over the terminals.
\end{observation}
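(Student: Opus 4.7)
The plan is to exploit the fact that edges in $C_{ij}$ live on the edge of the simplex joining the two terminals $i$ and $j$, so that every coordinate other than the $i$-th and $j$-th is identically zero at both endpoints. Then any single-threshold rule with a strictly positive threshold can neither capture nor cut such an edge through a coordinate $\ell \neq i,j$.

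First I would make the geometric observation explicit. By the definition of $C_{ij}$, the edge $({\bf v},{\bf v}')$ satisfies $u_i = 1, u_j = 0$, so the normalizing rule $u_\ell = (1 - u_i - u_j)/(k-2)$ for $\ell \neq i,j$ yields $u_\ell = 0$. Hence both endpoints have the form ${\bf v} = {\bf 1}_i$ and ${\bf v}' = (1-\epsilon){\bf 1}_i + \epsilon\, {\bf 1}_j$, and in particular $x_{{\bf v}, \ell} = x_{{\bf v}', \ell} = 0$ for every $\ell \neq i, j$.

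Next I would invoke the assignment rule of Algorithm~\ref{alg:single-threshold}: a vertex $w$ is assigned to terminal $\sigma(m)$ only if $x_{w, \sigma(m)} \geq \theta$. Since $\theta > 0$ and both endpoints have $x_{\cdot, \ell} = 0$ in every coordinate $\ell \neq i, j$, the condition $x_{\cdot, \ell} \geq \theta$ fails unconditionally for such $\ell$. Therefore, no terminal other than $i$ or $j$ can be the label of either endpoint under this rounding scheme; in particular neither endpoint is captured by or cut away into a terminal $\ell \neq i,j$. This holds pointwise for every realization of the (otherwise arbitrary) permutation $\sigma$, so the claimed probability is exactly zero regardless of the distribution over permutations.

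There is no real obstacle here; the only subtlety is to distinguish this from Observation~\ref{obs:ABuntouchedbyk}, where the other coordinates are merely bounded by $\epsilon$ (and thus the argument requires $\theta > \epsilon$). For $C_{ij}$ the other coordinates vanish \emph{exactly}, which is why the conclusion extends to all $\theta \in (0,1]$.
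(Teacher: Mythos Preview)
Your argument is correct and matches the paper's own reasoning: the paper simply records, in the paragraph just before the two observations, that for edges in $C_{ij}$ one has $u_i+u_j=1$ and hence all remaining coordinates equal $0$; the rest is left implicit and is exactly what you spell out.

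One minor overstatement: your claim that ``no terminal other than $i$ or $j$ can be the label of either endpoint'' overlooks Step~7 of Algorithm~\ref{alg:single-threshold}, which assigns any still-unlabeled vertex to the last terminal $\sigma(k)$ without a threshold test. When $\theta > 1-\epsilon$, the endpoint ${\bf v}' = (1-\epsilon){\bf 1}_i + \epsilon\,{\bf 1}_j$ survives every threshold and may land at $\sigma(k)\neq i,j$. This does not damage the observation as it is actually used in the analysis---the threshold step for any $\ell\neq i,j$ never touches the edge, and ${\bf v}={\bf 1}_i$ is always labeled $i$, so the edge can never be \emph{captured} by such an $\ell$---but the final clause of your argument is phrased slightly more strongly than the algorithm warrants.
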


We break the analysis into several cases depending on the value of $\theta$.
We shall use the following in our analysis below.
\begin{itemize}
\item Observation~\ref{obs:ABuntouchedbyk} applies to the range of $\theta$ dealt with in Sections~\ref{sec:3bthru1}, \ref{sec:bthru3b}, and \ref{sec:lowthrub}, and hence while calculating the probability of an edge from $A_{ij}$ and $B_{ij}$ being cut, we can focus only on whether terminals $i$ or $j$ cut the edge.
\item Observation~\ref{obs:Cuntouchedbyk} applies to the entire range of $\theta$, and in particular we shall use it in Sections~\ref{sec:lowthrub} and \ref{sec:verytop} by focusing only on whether terminals $i$ and $j$ cut the edge when dealing with the edge from $C_{ij}$.
\item In the algorithm, when using the strategy of a single threshold, we note that only the first $k-1$ terminals in the random permutation get assigned vertices according to the threshold. The last terminal gets assigned all the remaining unassigned vertices. Hence, in the following analysis, at several points, we shall sum over only $k-1$ vertices (instead of $k$).
\end{itemize}
Finally, we remind the reader we shall be assuming that $\sigma$ is the identity permutation in the analysis below; since the distribution of the edges is symmetric over the terminals, the choice of the permutation does not make a difference in the analysis. W recall that $b = \sqrt{5}-2$ and  $\frac{1}{1-b} = \frac{3+\sqrt{5}}{4}$ which will be useful below.

\subsubsection{Case $1-\epsilon < \theta \leq 1$}
\label{sec:verytop}
In this case, for each $i,j\in [k], ~i\ne j$, the edge in $C_{ij}$ will be cut by terminal $i$. Since each $C_{ij}$ is chosen with probability $\frac{2(1-2b)}{(1-b)(k-2)} \frac{1}{k(k-1)}$, the total probability of an edge being cut is at least $\frac{2(1-2b)}{(1-b)(k-2)}$. Note that by the choice of $\epsilon$, this quantity is equal to $\frac{2}{1-b}~\epsilon = \frac{3+\sqrt{5}}{2} \epsilon$.

\subsubsection{Case $3b < \theta \le 1-\epsilon$}
\label{sec:3bthru1}
It is easy to see in this case that for each $i \in [k-1]$, the terminal $i$ will cut the edges in $A_{ij}$ (for all $j\in [k] \setminus \{i\}$) where $u_{i}-\epsilon < \theta \le u_{i}$. This is an $\epsilon$-size interval among the edges in $A_{ij}$ parameterized by $u_i$. Since the probability density of choosing an edge with given $u_i$ in $A_{ij}$ is $\frac{1}{1-b} (1-\frac{1}{k-2}) \frac{1}{k(k-1)}$, the probability that terminal $i$ cuts an edge in $A_{ij}$ is $\frac{1}{1-b}(1-\frac{1}{k-2}) \frac{1}{k(k-1)}\cdot \epsilon$. Summing over all $A_{ij}$'s ($i \in [k-1], j\in [k] \setminus \{i\}$), we get that the total probability of an edge being cut is $\frac{1}{1-b}(1-\frac{1}{k-2})(1-\frac{1}{k})~\epsilon = \frac{3+\sqrt{5}}{4}(1-O(\frac{1}{k}))~\epsilon$.

\subsubsection{Case $b < \theta < 3b-\epsilon$}
\label{sec:bthru3b}
In this case, each $i \in [k-1]$ cuts the edges in $B_{ij}$ (for all $j\in [k] \setminus \{i\}$) where $u_{i}-\epsilon < \theta \le u_{i}$. Again, this is an $\epsilon$-size interval among the edges in $B_{ij}$, in terms of the parameter $u_i$. Since the probability density of choosing an edge with given $u_i$ in $B_{ij}$ is $(\frac{1}{1-b} - \frac{1}{2b(k-2)}) \frac{1}{k(k-1)}$, the probability of an edge being cut is $(\frac{1}{1-b} - \frac{1}{2(k-2)b})~\frac{1}{k(k-1)}~\epsilon$. Summing over all $B_{ij}$'s ($i \in [k-1], j\in [k] \setminus \{i\}$), we get that the total probability of an edge being cut is $(\frac{1}{1-b} - \frac{1}{2(k-2)b})~(1-\frac{1}{k})~\epsilon
 = (\frac{3+\sqrt{5}}{4} - O(\frac{1}{k}))~\epsilon$.

\subsubsection{Case $3b-\epsilon \le \theta \le 3b$}
This case is essentially a transition between Case~\ref{sec:3bthru1} and \ref{sec:bthru3b}. Here, both edges in $A_{ij}$ and $B_{ij}$ can be cut by terminal $i$, with probabilities that depend on the value of $\theta$. As $\theta$ moves from $3b-\epsilon$ to $3b$, the interval of edges cut in $A_{ij}$ increases and the interval of edges cut in $B_{ij}$ decreases linearly.
It can be verified that the probability of being cut is a convex combination of the probabilities in Cases~\ref{sec:3bthru1} and \ref{sec:bthru3b}, with a linear transition from Case~\ref{sec:3bthru1} at $\theta=3b$ to Case~\ref{sec:bthru3b} at $\theta=3b-\epsilon$, hence always bounded by $(\frac{3+\sqrt{5}}{4} - O(\frac{1}{k}))~\epsilon$.

\subsubsection{Case $\epsilon < \theta \le b$}
\label{sec:lowthrub}
In this case, each terminal $i \in [k-1]$ cuts the edges in $B_{ji}$ (for all $j>i$) where $u_{i} \le \theta \le u_{i}+\epsilon$. Please note the two ways this case differs from Section~\ref{sec:bthru3b}. First the edges in $B_{ji}$ are being cut by terminal $i$ (and not by terminal $j$) and only when $j>i$ (i.e., terminal $j$ is considered after terminal $i$, otherwise the edge would have been captured by terminal $j$). The probability density of choosing edges in $B_{ji}$ in terms of the parameter $u_i$ is $(\frac{2}{1-b} - \frac{1}{(k-2)b}) \frac{1}{k(k-1)}$, twice as large as the probability density in terms of parameter $u_j$. Therefore, for any $B_{ji}$, with $i\in [k-1]$ and $j>i$, the probability of an edge being cut is $(\frac{2}{1-b} - \frac{1}{(k-2)b})~\frac{1}{k(k-1)}~\epsilon$. Now we are summing only over $i \in [k-1]$ and $j>i$, which leads again to a total probability of an edge being cut $(\frac{1}{1-b} - \frac{1}{2(k-2)b})~\epsilon  = (\frac{3+\sqrt{5}}{4} - O(\frac{1}{k}))~\epsilon$.

\subsubsection{Case $0 < \theta \le \epsilon$}
\label{sec:verybottom}
This is the only case in which we have to consider the possibility of an edge being captured by a terminal other than $i,j$. (See Observations~\ref{obs:ABuntouchedbyk}~and~\ref{obs:Cuntouchedbyk} above.) Here we count only the contribution of edges in $C_{ij}$ (which can never be captured by $k \neq i,j$ by Observation~\ref{obs:Cuntouchedbyk}). 
In this case, for each $i\in [k]$ and $j>i$, the edge in $C_{ji}$ is cut by terminal $i$. Hence, the total probability of an edge being cut in $\bigcup_{(i,j):j>i}C_{ij}$ is $\frac{1-2b}{(1-b)(k-2)}$. Note that by the choice of $\epsilon$, this quantity is equal to $\frac{1}{1-b}~\epsilon = \frac{3+\sqrt{5}}{4}~\epsilon$.

\begin{table}[h]
\begin{center}
\begin{tabular}{|c|c|c|c|}
\hline
\multirow{2}{*}{Range of $\theta$}&\multicolumn{2}{|c|}{Cost incurred by}&\multirow{2}{*}{Cost}\\\cline{2-3}
&Edges cut in&by Terminal&\\\hline
$[1-\epsilon, 1]$ & $C_{ij}$ & $i$ & $\frac{3+\sqrt{5}}{2}$\\\hline
$[3b, 1-\epsilon]$ & $A_{ij}$ & $i$ & $\frac{3+\sqrt{5}}{4} - O(\frac{1}{k})$\\\hline
$[b, 3b]$ & $B_{ij}$ & $i$ & $\frac{3+\sqrt{5}}{4} - O(\frac{1}{k})$\\\hline
$[\epsilon, b]$ & $B_{ji}$ & $i$ ($j \succ i$)$^{\dagger}$ & $\frac{3+\sqrt{5}}{4}  - O(\frac{1}{k})$\\\hline
$(0,\epsilon]$ & $C_{ji}$ & $i$ & $\frac{3+\sqrt{5}}{4}$\\\hline
\end{tabular}
\end{center}
\caption{This table summarizes the expected cost paid by the algorithm depending on the value of $\theta$ in the single-threshold strategy. \newline $^{\dagger}$This indicates that terminal $j$ must occur after terminal $i$ in the random permutation over terminals chosen by the algorithm.}
\end{table}
\section{A tight example for $\frac{10+4\sqrt{3}}{13}$}
\label{sec:1.302-tight}
Here we prove the following lower bound, matching our $\frac{10+4\sqrt{3}}{13}$-approximation from Section~\ref{sec:1.302}.
\vspace{-0.1in}
\begin{theorem}
For any combination of the Exponential Clocks Rounding Scheme and schemes with $k$ threshold cuts, one for each terminal (and an arbitrary joint distribution), as long as the analysis of threshold cuts considers only the thresholds $\theta_i, \theta_j$ for edges of type $(i,j)$ (and not the possibility of being captured by another terminal), it cannot achieve a factor better than $\frac{10+4\sqrt{3}}{13}$.
\end{theorem}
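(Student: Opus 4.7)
The plan is to extend the minimax game of Appendix~\ref{sec:tight-example} so that the algorithm's strategy space now consists of convex combinations of the Exponential Clocks Rounding Scheme together with any $k$-threshold scheme whose analysis for edges of type $(i,j)$ uses only the joint distribution of the two thresholds $(\theta_i, \theta_j)$. As before, to establish the lower bound of $\frac{10+4\sqrt{3}}{13}$ it suffices to exhibit a probability distribution over edges against which every algorithm in this class pays, in expectation, at least $\frac{10+4\sqrt{3}}{13} - O(1/k)$ per unit of edge length.

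I would construct the adversary distribution to mirror the tight structure of Algorithm~\ref{alg:1.302} with breakpoint $b = 2\sqrt{3} - 3$. Edges of type $(i,j)$, with all other coordinates set to $(1 - u_i - u_j)/(k-2)$, are drawn from three families corresponding to the three cases of the cut-density analysis in Section~\ref{sec:1.302}: one supported on $\{0 \le u_i \le u_j \le b\}$, one on $\{0 \le u_i \le b < u_j\}$, and one on $\{b < u_i \le u_j\}$; a small family of boundary edges analogous to $C_{ij}$ in Appendix~\ref{sec:tight-example} handles the extremes and accounts for the $O(1/k)$ correction. The densities within each region are chosen as the dual optimum of the LP
\[
\min \bigl(2 + \tfrac{2}{3} b\bigr)\, \tilde{c} \quad \text{s.t.} \quad \tilde{c}\bigl(\tfrac{3}{2} + \tfrac{2}{3} b - \tfrac{1}{6} b^2\bigr) = 1,
\]
whose optimum $\frac{10+4\sqrt{3}}{13}$ is exactly the factor attained in Section~\ref{sec:1.302}; by complementary slackness the adversary's support lives precisely where the constraints behind Theorem~\ref{thm:1.302} are tight.

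To bound an arbitrary algorithm in the allowed class, I would first apply the symmetry-of-terminals argument from \cite{KKSTY04}: precomposing with a uniformly random relabeling allows one to assume that the pairwise joint distribution of $(\theta_i, \theta_j)$ is the same symmetric distribution $\rho$ for every pair $i \ne j$. Theorem~\ref{thm:pairwise-realizable} then represents $\rho$ as a convex combination of symmetric product distributions, reducing the analysis to the case where $\theta_1, \ldots, \theta_k$ are i.i.d.\ from some density $\psi$ and processed in an order that may depend on their values. The Exponential Clocks contribution is the integral of $2 - u_i - u_j$ against the adversary density and, by construction, yields $\frac{10+4\sqrt{3}}{13}$ up to $O(1/k)$. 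For any i.i.d.-threshold scheme the per-edge cut density at $(u_i, u_j)$ can be written exactly as in Lemmas~\ref{lem:density} and~\ref{lem:desc-cut-density} (and convex combinations thereof arising from value-dependent ordering), so the expected cost becomes a linear functional in $\psi$ and the ordering rule.

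The main obstacle will be verifying that no choice of $\psi$ together with a value-dependent ordering can drive this functional below $\frac{10+4\sqrt{3}}{13}$. The sub-cases where the ordering is uniformly random (Single Threshold) and where it is determined by the $\psi$-values in descending order (Descending Thresholds) are handled directly by a case analysis over the threshold value $\theta$, mirroring Sections~\ref{sec:verytop}--\ref{sec:verybottom}. The general case reduces to a linear program over pairwise-realizable distributions which, by our complementary-slackness construction of the adversary density, has optimum exactly $\frac{10+4\sqrt{3}}{13}$ and thus certifies the lower bound.
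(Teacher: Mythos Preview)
Your high-level plan---set up the minimax game, exhibit an adversary edge distribution, and verify that both Exponential Clocks and every threshold strategy pay at least $\frac{10+4\sqrt{3}}{13}$---is exactly the paper's framework. But two parts of your execution miss.

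First, the adversary construction is not concrete, and the sketch you give does not match what is actually required. The paper's distribution is not ``three families corresponding to the three cases of Section~\ref{sec:1.302}'' plus $C_{ij}$-style boundary edges; it consists of nine regions $R_A, R_{B1}, R_{B2}, R_{C1}, R_{C2}, R_{D1}, R_{D2}, R_{E1}, R_{E2}$ with carefully chosen shapes (a two-dimensional square, two triangles, and several one-dimensional segments, some carrying \emph{non-uniform} linear density) and specific probability masses $\alpha,\gamma$ satisfying $2\alpha+2\gamma=1$. There is no $O(1/k)$ term and no boundary family here: the paper works with cut densities rather than discretized edges, and the modified game rules already exclude capture by terminals other than $i,j$, so the $\epsilon$-discretization and the $O(1/k)$ slack from Appendix~\ref{sec:tight-example} simply do not appear. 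Saying ``the densities are the dual optimum of the LP'' does not substitute for writing them down---especially since the one-line LP you quote (in $b$ and $\tilde c$) is what remains \emph{after} the paper has already fixed the parametric forms of $\phi,\psi$; it is not an LP over all threshold schemes, and its dual is not the adversary distribution you need.

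Second, the detour through Theorem~\ref{thm:pairwise-realizable} is unnecessary and does not close the argument. The paper bypasses it entirely: it shows \emph{pointwise} that for every fixed pair $(\theta_1,\theta_2)$ with $\theta_1$ applied first, the cost against the adversary is at least $\frac{10+4\sqrt{3}}{13}$, via a four-case analysis on whether each of $\theta_1,\theta_2$ lies in $(0,b]$ or $(b,1]$. That immediately handles all mixed strategies and all ordering rules. Your reduction to i.i.d.\ thresholds decomposes only the pairwise \emph{marginal} of $(\theta_i,\theta_j)$; the ordering (which of the two is applied first) is a separate degree of freedom that does not factor through the convex decomposition of $\rho$, so the ``general case'' you defer to complementary slackness is not actually resolved. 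The direct four-case computation is both simpler and complete.
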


More precisely, what this means that the analysis takes into account the possibility of cutting an edge $(i,j)$ by thresholds $\theta_i, \theta_j$ or allocating the edge fully to terminal $i$ or $j$, but not the possibility of allocating the edge fully to another terminal. This is the way we analyze our algorithm in Section~\ref{sec:1.302} where this factor is achieved. 

As in Section~\ref{sec:tight-example}, we define a probability distribution of edges in the simplex such that any partition strategy of the following type has to pay at least $\frac{10+4\sqrt{3}}{13}$ in expectation:
\begin{enumerate}[noitemsep,topsep=0pt,parsep=0pt,partopsep=0pt]
\item Exponential Clocks: a random partition generated by Algorithm~\ref{alg:exp-clock}, or
\item Simple Thresholds: any sequence of $k$ threshold cuts, one for each terminal (such as the Single Threshold or Descending Thresholds Rounding Scheme).
\end{enumerate}

Here, we modify the game to reflect the fact that we are considering analysis that depends only on the two coordinates $u_i,u_j$ for edges of type $(i,j)$: if the partition player uses thresholds $\theta_i,\theta_j$ in these coordinates, and say $\theta_i$ is applied first, he pays whenever the edge is cut in coordinate $u_i$ or the edge is cut in coordinate $u_j$ and {\em not captured} in coordinate $u_i$ (i.e. other coordinates are not considered for the purposes of determining the payment). 

\subsection{Probability distribution over edges}

Let us define the distribution over edges of type $(1,2)$ (Figure~\ref{fig:tight-example-1-302}). Edges of type $(i,j)$ are distributed analogously.
We have $\alpha = \frac{-3 + 4\sqrt{3}}{13}$, $\gamma = \frac{19 - 8\sqrt{3}}{26}$ and $b = 2 \sqrt{3} - 3$.
The location of each edge satisfies $u_{3} = u_{4} = \cdots = u_{k} = (1 - u_{1} - u_{2})/(k-2)$, i.e. it is determined by $u_1$ and $u_2$.

\begin{itemize}[noitemsep,topsep=0pt,parsep=0pt,partopsep=0pt,leftmargin=*]
\item Region $R_{A}$: Uniform density in the region defined by $u_{1} \in [0,b]$ and $u_{2} \in [0,b]$ with a total probability mass of $\alpha$.
\item Region $R_{B1}$: Uniform density in the region defined by $u_{1} - \frac{1-2b}{b} u_{2} \ge b$ on the simplex with a total probability mass of $\alpha/4$.
\item Region $R_{B2}$: Uniform density in the region defined by $u_{2} - \frac{1-2b}{b} u_{1} \ge b$ on the simplex with a total probability mass of $\alpha/4$.
\item Region $R_{C1}$: Density proportional to $(u_{1} - (1-b))$ in the region defined by $u_{1} \in [1-b,1]$ and $u_{1} + u_{2} = 1$, with a total probability mass of $\frac{b}{1-b} (\alpha/4)$.
\item Region $R_{C2}$: Density proportional to $(u_{2} - (1-b))$ in the region defined by $u_{2} \in [1-b,1]$ and $u_{1} + u_{2} = 1$, with a total probability mass of $\frac{b}{1-b} (\alpha/4)$.
\item Region $R_{D1}$: Density proportional to $(1-b-u_{1})$ in the region defined by $u_{1} - \frac{1-2b}{b} u_{2} = b$ and $u_{1} \in [b,1-b]$,  with a total probability mass of $\frac{1-2b}{1-b} (\alpha/4)$.
\item Region $R_{D2}$: Density proportional to $(1-b-u_{2})$ in the region defined by $u_{2} - \frac{1-2b}{b} u_{1} = b$ and $u_{1} \in [b,1-b]$, with a total probability mass of $\frac{1-2b}{1-b} (\alpha/4)$.
\item Region $R_{E1}$: Uniform density in the region defined by $u_{1} \in [b,1]$ and $u_{2}=0$, with a total probability mass of $\gamma$.
\item Region $R_{E2}$: Uniform density in the region defined by $u_{2} \in [b,1]$ and $u_{1}=0$, with a total probability mass of $\gamma$.
\end{itemize}

We note that the total probability mass we have used is $\alpha + \frac{\alpha}{2} + \frac{b}{1-b} \frac{\alpha}{2} + \frac{1-2b}{1-b} \frac{\alpha}{2} + 2 \gamma = 2 \alpha + 2 \gamma = 1$.

\begin{figure}[h]

\centering

\pgfmathsetmacro{\Scale}{3}

\newcommand{\flauntFont}[1]{{ \bf #1}}

\pgfmathsetmacro{\eps}{0.05}

\pgfmathsetmacro{\B}{2*sqrt(3) - 3}

\pgfmathsetmacro{\sineSixty}{sin(60)}
\pgfmathsetmacro{\cosineSixty}{cos(60)}

\pgfmathsetmacro{\uOnex}{0}
\pgfmathsetmacro{\uOney}{0}
\pgfmathsetmacro{\uTwox}{1}
\pgfmathsetmacro{\uTwoy}{0}
\pgfmathsetmacro{\uThreex}{\cosineSixty}
\pgfmathsetmacro{\uThreey}{\sineSixty}
\pgfmathsetmacro{\bx}{\B * \cosineSixty}
\pgfmathsetmacro{\by}{\B * \sineSixty}
\pgfmathsetmacro{\OneMinusBx}{((1-\B) * \cosineSixty)}
\pgfmathsetmacro{\OneMinusBy}{((1- \B) * \sineSixty)}

\pgfmathdeclarefunction{customIfIfElse}{5}{%
\pgfmathparse{#1*#2 + !#1*(#3*#4 + !#3*#5)}%
}
\pgfmathdeclarefunction{customIfElse}{3}{\pgfmathparse{#1 * #2 + !#1 * #3}}

\def\DrawEdges#1#2#3#4#5#6#7#8{
{
\pgfmathsetmacro{\stepsizeinverse}{100}

\pgfmathsetmacro{\increment}{#5*\eps}

\pgfmathsetmacro{\startpoint}{#1*\stepsizeinverse}
\pgfmathsetmacro{\endpoint}{#3*\stepsizeinverse}

\pgfmathsetmacro{\Slope}{((#4-#2)/(#3-#1))}
\pgfmathsetmacro{\Intersection}{(#2 - \Slope* #1)}



\ifthenelse{\equal{#6}{1}}
{\edef\mya{#7}}{
\ifthenelse{\equal{#6}{-1}}
{\edef\mya{#7}}{\edef\mya{1}}
}

\foreach \x [evaluate=\x as \pointx using ((\x)/\stepsizeinverse), evaluate=\x as \pointy using ((\Slope * \pointx)+\Intersection)] in {\startpoint,...,\endpoint}	
{



\ifthenelse{\equal{#6}{1}}{
\pgfmathparse{\mya + #8};
\xdef\mya{\pgfmathresult};
}{;}

\ifthenelse{\equal{#6}{-1}}{
\pgfmathparse{\mya - #8};
\xdef\mya{\pgfmathresult};
}{;}

\draw[-, line width=\mya*1pt] (\pointx , \pointy) -- (\pointx+\increment , \pointy);
}
}
}

\begin{tikzpicture}[scale=\Scale]

\coordinate(uOne) at (\uOnex, \uOney);
\coordinate(uTwo) at (\uTwox, \uTwoy);
\coordinate(uThree) at (\uThreex, \uThreey);
\coordinate(A) at (\B, 0);
\coordinate(B) at (1 -\B, 0);
\coordinate(C) at (\OneMinusBx, \OneMinusBy);
\coordinate(D) at (1 - \OneMinusBx, \OneMinusBy);
\coordinate(E) at (0.5, 1 - 2*\B);

\draw[fill=gray!40] (uThree) -- (C) -- (E) -- (D) -- cycle;
\draw[fill=gray!50] (C) -- (A) -- (uOne) -- cycle;
\draw[fill=gray!50] (D) -- (B) -- (uTwo) -- cycle;

\draw[dashed,-] (uOne) -- (uTwo);
\draw[dashed,-] (uTwo) -- (uThree);
\draw[dashed,-] (uOne) -- (uThree);

\draw[dashed,-] (A) -- (D);
\draw[dashed,-] (B) -- (C);

\draw[] (uOne) node[left] {$(u_1=1, u_2=0)$};
\draw[] (uTwo) node[right] {$(u_1=0, u_2=1)$};
\draw[] (uThree) node[above] {$(u_1=0, u_2=0)$};
\draw[] (C) node[left] {$(u_{1}=b, u_{2}=0)$};
\draw[] (D) node[right] {$(u_{1}=0, u_{2}=b)$};

\draw[] (0.5,0.5) node[above]{\flauntFont{$R_{A}$}};
\draw[] (0.25,0.1) node[above]{\flauntFont{$R_{B1}$}};
\draw[] (0.75,0.1) node[above]{\flauntFont{$R_{B2}$}};
\draw[] (0.25,0) node[below]{\flauntFont{$R_{C1}$}};
\draw[] (0.75,0) node[below]{\flauntFont{$R_{C2}$}};
\draw[] (\B - 0.2, \OneMinusBy/2+ 0.1) node[right]{\flauntFont{$R_{D1}$}};
\draw[] (1 - \B - 0.05, \OneMinusBy/2+ 0.1) node[right]{\flauntFont{$R_{D2}$}};
\draw[] (\OneMinusBx/2, \OneMinusBy/2) node[left]{\flauntFont{$R_{E1}$}};
\draw[] (1 - \OneMinusBx/2, \OneMinusBy/2) node[right]{\flauntFont{$R_{E2}$}};

\DrawEdges{0}{0}{\OneMinusBx}{\OneMinusBy}{0.5}{0}{0}{0}
\DrawEdges{1}{0}{(1-\OneMinusBx)}{\OneMinusBy}{-0.5}{0}{0}{0}
\DrawEdges{\OneMinusBx}{\OneMinusBy}{\B}{0}{-0.5}{-1}{2}{0.1}
\DrawEdges{(1-\B)}{0}{(1-\OneMinusBx)}{\OneMinusBy}{0.5}{1}{0}{0.1}
\DrawEdges{0}{0}{\B}{0}{0.5}{-1}{5}{0.1}
\DrawEdges{(1-\B)}{0}{1}{0}{0.5}{1}{0.01}{0.1}

\end{tikzpicture}
\vspace{-0.2in}
\caption{Tight example: edges of type $(1,2)$ have been shown.}
\label{fig:tight-example-1-302}
\end{figure}

We defer the analysis to the full version of the paper.

\subsection{Performance of the Exponential Clocks \\Rounding Scheme}

For an edge of location $\bu$, the cut density is $2-u_1-u_2$ by Lemma~\ref{lem:density}. We compute the contribution of each region to the total expected cost of the partition. Since the expression if linear in $(u_1,u_2)$, the average over each region is equal to $2-c_1-c_2$ where $(c_1,c_2)$ is the center of mass of that region.

\subsubsection{Cut density in region $R_{A}$}
The center of mass of this diamond-shaped region is $(\frac{b}{2}, \frac{b}{2})$, and its probability mass is $\alpha$,  hence the expected cost is $\alpha (2 - \frac{b}{2} - \frac{b}{2}) = \alpha (2-b)$.

\subsubsection{Cut density in regions $R_{B1}$ and $R_{B2}$}
$R_{B1}$ is a triangle, with corners $(1,0), (b,0), (1-b,b)$. Hence, its center of mass is $(\frac{2}{3}, \frac{b}{3})$ and the expected cost is $\frac{\alpha}{4} (2 - \frac{1}{3} (2+b)) = \frac{\alpha}{12} (4 - b)$. By symmetry, the cost for region $R_{B2}$ is also $\frac{\alpha}{12} (4 - b)$.

\subsubsection{Cut density in regions $R_{C1}$ and $R_{C2}$}
Each edge in these regions has a cut density of 1. Hence, the cost for each of $R_{C1}, R_{C2}$ is $\frac{b}{1-b} \frac{\alpha}{4}$.

\subsubsection{Cut density in regions $R_{D1}$ and $R_{D2}$}

The region $R_{D1}$ is a line segment, with density linearly increasing from the endpoint $(1-b, b)$ (where it is $0$) towards the endpoint $(b,0)$. Therefore, its center of mass is located at $1/3$ of its length, at the point $(c_1,c_2) = \frac13 (1-b,b) + \frac23 (b,0) = (\frac{1+b}{3},\frac{b}{3})$. The cost for this region is $\frac{1-2b}{1-b} \frac{\alpha}{4} (2-c_1-c_2) = \frac{1-2b}{1-b} \frac{\alpha}{4} (2 - \frac13(1+2b))  = \frac{1-2b}{1-b} \frac{\alpha}{12} (5 - 2b)$. We get the same cost for region $R_{D2}$.


\subsubsection{Cut density in regions $R_{E1}$ and $R_{E2}$}
These regions are line segments with a uniform distribution, hence the center of mass is in the middle of the segment which is $(\frac{1+b}{2}, 0)$ in the case of $R_{E1}$ and $(0, \frac{1+b}{2})$ in the case of $R_{E2}$. Therefore, the cost for each region is $\gamma (2 - \frac{1+b}{2}) = \frac{\gamma}{2} (3 - b)$.

\subsubsection{Total cost}

Adding up the costs of all regions, we obtain that the total cost paid by the partition is equal to
\begin{eqnarray*}
\E{cost}  = &  \alpha (2-b) + \frac{\alpha}{6} (4 - b) + \frac{b}{1-b} \frac{\alpha}{2} + \frac{1-2b}{1-b} \frac{\alpha}{6} (5- 2b) \\&~~~+ \gamma (3-b) \\
 = & \frac{\alpha}{6} \left( 6(2-b) + (4-b) + \frac{5-9b+4b^2}{1-b} \right) + \gamma (3-b) \\
 = & \frac{\alpha}{6} \left( 6(2-b) + (4-b) + (5-4b) \right) + \gamma (3-b)  \\
 = & \frac{\alpha}{6} (21 - 11 b) + \gamma (3-b).
\end{eqnarray*} 
Recall that $2 \alpha + 2 \gamma = 1$, hence $\gamma = \frac12 - \alpha$ and the total cost is $\E{cost} = \frac{\alpha}{6} (21 - 11b) + (\frac12 - \alpha) (3-b) = \frac{\alpha}{6} (3 - 5b) + \frac12 (3-b)$. We plug in $b = 2 \sqrt{3} - 3$ which gives $\E{cost} = \frac{10+4\sqrt{3}}{13}$.

\subsection{Performance of threshold cuts}

Here we consider two threshold cuts, say first $\{i: x_{i1} \geq \theta_1\}$ and then $\{i: x_{i2} \geq \theta_2\}$. As we mentioned before, we do not consider the possibility of capturing edges by other terminals here. Let us consider several cases depending on the values of $\theta_1, \theta_2$.

\

{\bf Case 1.} {$\theta_1, \theta_2 \in (0,b]$}

The first threshold cuts the edges in regions $R_A$, $R_{B2}$, $R_{C2}$ and $R_{D2}$. The cost of cutting region $R_A$ is $\frac{\alpha}{b}$. The cost of cutting regions $R_{B2}$, $R_{C2}$ and $R_{D2}$ depends on the value of $\theta_1$ in a linear fashion, and the expected cost across $\theta_1 \in (0,b]$ is $\frac{1}{b} (\frac{\alpha}{4} + \frac{b}{1-b} \frac{\alpha}{4} + \frac{1-2b}{1-b} \frac{\alpha}{4})
 = \frac{1}{b} \frac{\alpha}{2}$. Therefore, the cost varies linearly between $\frac{1}{b} \alpha$ for $\theta_1 = 0$ and cost $0$ for $\theta_1 = b$; more precisely the cost is $\frac{\alpha}{b^2} (b - \theta_1)$.

Since the region $\{\bu: u_1 \geq \theta_1\}$ is allocated to terminal 1, the second threshold $\theta_2$ cuts only edges such that $u_1 < \theta_1$, in particular only in region $R_A$. The cost of this cut is $\frac{\alpha}{b} \cdot \frac{\theta_1}{b} = \frac{\alpha}{b^2}  \theta_1$. Therefore, the combined cost of these two cuts is $\frac{\alpha}{b} + \frac{\alpha}{b^2} (b-\theta_1) + \frac{\alpha}{b^2} \theta_1 = \frac{2\alpha}{b} = \frac{10+4\sqrt{3}}{13}$.

\

{\bf Case 2.} {$\theta_1, \theta_2 \in (b,1]$}

In this case, there is no interaction between the two cuts. The first threshold $\theta_1$ cuts through $R_{B1}, R_{E1}$ and either $R_{B1}$ or $R_{C1}$. Similarly, the second threshold $\theta_2$ cuts through $R_{B2}, R_{E2}$ and either $R_{B2}$ or $R_{C2}$, regardless of the value of $\theta_1$. The cost of both cuts is the same, so we analyze just $\theta_1$.

The cost of cutting region $R_{E1}$ is uniform in the interval $\theta_1 \in [b,1]$, and is equal to $\frac{1}{1-b} \gamma$. We claim that the combined cost of cutting region $R_{B1}$, $R_{C1}$ and $R_{D1}$ is also independent of $\theta_1$: the cost of cutting the triangle $R_{B1}$ increases linearly from $\theta_1 = b$ to $\theta_1 = 1-b$ and then decreases linearly from $\theta_1 = 1-b$ to $\theta_1 = 1$. Conversely, the cost of cutting $R_{D1}$ decreases linearly from $\theta_1 = b$ to $\theta_1 = 1-b$, and then is replaced by the cost of cutting $R_{C1}$ which increases linearly from $\theta_1 = 1-b$ to $\theta_1 = 1$. The probability mass in $R_{D1}$ is $\frac{1-2b}{1-b} \alpha/4$, equal to the portion of the triangle $R_{B1}$ cut by thresholds $\theta_1 \in [b,1-b]$. Similarly, the probability mass in $R_{C1}$ is $\frac{b}{1-b} \alpha/4$, equal to the portion of the triangle $R_{B1}$ cut by thresholds $\theta_1 \in [1-b,1]$. Therefore these contribution balance each other out and add up to a uniform density between $b < \theta_1 \leq 1$, which is the total probability mass in these regions, $\alpha/2$, divided by $1-b$, hence $\frac{1}{2(1-b)} \alpha$. Together with the contribution of $R_{E1}$, this is
$\frac{1}{1-b} \gamma + \frac{1}{2(1-b)} \alpha = \frac{5+2\sqrt{3}}{13}$ after substituting our parameters. Both cuts together have cost $\frac{10+4\sqrt{3}}{13}$.

\

{\bf Case 3.} {$\theta_1 \in (b,1], \theta_2 \in (0,b]$}

In this case, the first cut is analyzed as in Case 2 and has cost $\frac{1}{1-b} \gamma + \frac{1}{2(1-b)} \alpha = \frac{5+2\sqrt{3}}{13}$.
The second threshold cuts at least through the region $R_A$ (and possible some other edges depending on the value of $\theta_1$ but we ignore these). The cost of the second cut is at least $\frac{1}{b} \alpha = \frac{5+2\sqrt{3}}{13}$. Hence the cost of both cuts is at least $\frac{10+4\sqrt{3}}{13}$.

\

{\bf Case 4.} {$\theta_1 \in (0,b], \theta_2 \in (b,1)$}

The cost of the first cut determined by $\theta_1$ is as in Case 1, $\frac{\alpha}{b^2} (b - \theta_1)$. The cost of the second cut is similar to Case 2, but note that now $\theta_2$ cuts only edges with $u_1 \leq \theta_1$, due to the first cut. The reduction in cost compared to Case 2 depends on the value of $\theta_2$: we claim that the cheapest cut is obtained for $\theta_2 = b$. For all other $\theta_2 \in (b,1]$, the cost of the cut is either the same as the one for $\theta_2=b$ (if the point $(\theta_1, \theta_2)$ is inside $R_{B2}$), or the cost is the full cost of Case 2 (if the point $(\theta_1, \theta_2)$ is outside of $R_{B2}$). When $\theta_2 = b$, the cost of the cut is equal to $\frac{1}{1-b} \gamma$ (the cost of cutting $R_{E2}$) plus $\frac{\theta_1}{b} \cdot \frac{1}{2(1-b)} \alpha$, the cost of cutting $R_{B2}$ scaled by the fraction $\frac{u_1}{b}$ of the edges in $R_{B2}$ that are unnasigned to terminal $1$. Thus the total cost of both cuts is
\begin{align*} \frac{\alpha}{b} + \frac{\alpha}{b^2} (b - \theta_1) + \frac{\gamma}{1-b} + \frac{\theta_1}{b} \cdot \frac{1}{2(1-b)} \alpha\\
= \frac{2\alpha}{b} + \frac{\gamma}{1-b} + \left(-\frac{1}{b} + \frac{1}{2(1-b)} \right) \frac{\alpha}{b} \theta_1.  \end{align*}
Since the expression in front of $\frac{\alpha}{b} \theta_1$ is negative, the minimum cost is achieved for $\theta_1 = b$. Then we obtain $\frac{1}{1-b} \gamma + \frac{3}{2(1-b)} \alpha = \frac{10+4\sqrt{3}}{13}$.

\end{document}